\newcommand{\mbs}[1]{\pmb{#1}}
\newcommand{\vect}[1]{{\lowercase{\mbs{#1}}}}
\newcommand{\mat}[1]{{\uppercase{\mbs{#1}}}}
\newcommand{\T}{{\scriptscriptstyle\mathsf{T}}}
\renewcommand{\H}{{\scriptscriptstyle\mathsf{H}}}
\renewcommand{\Re}[1][]{\ifthenelse{\isempty{#1}}{\operatorname{Re}}{\operatorname{Re}\left(#1\right)}}
\renewcommand{\Im}[1][]{\ifthenelse{\isempty{#1}}{\operatorname{Im}}{\operatorname{Im}\left(#1\right)}}
\newcommand{\SNR}{\mathsf{snr}}
\newcommand{\rv}{\vect{r}}
\newcommand{\uv}{\vect{u}}
\newcommand{\vv}{\vect{v}}
\newcommand{\xv}{\vect{x}}
\newcommand{\yv}{\vect{y}}
\newcommand{\Sigmam}{\pmb{\Sigma}}
\newcommand{\Am}{\mat{a}}
\newcommand{\Mm}{\mat{M}}
\newcommand{\Um}{\mat{u}}
\newcommand{\Vm}{\mat{V}}
\newcommand{\Cc}{{\mathcal C}}
\newcommand{\Dc}{{\mathcal D}}
\newcommand{\Nc}{{\mathcal N}}
\newcommand{\Pc}{{\mathcal P}}
\newcommand{\Qc}{{\mathcal Q}}
\newcommand{\Rc}{{\mathcal R}}
\newcommand{\CC}{\mathbb{C}}
\newcommand{\Id}{\mat{\mathrm{I}}}
\newcommand{\CN}[1][]{\ifthenelse{\isempty{#1}}{\mathcal{N}_{\mathbb{C}}}{\mathcal{N}_{\mathbb{C}}\left(#1\right)}}
\renewcommand{\P}[1][]{\ifthenelse{\isempty{#1}}{\mathbb{P}}{\mathbb{P}\left(#1\right)}}
\newcommand{\E}[1][]{\ifthenelse{\isempty{#1}}{\mathbb{E}}{\mathbb{E}\left[#1\right]}}
\newcommand{\I}[1][]{\ifthenelse{\isempty{#1}}{\mathbb{I}}{\mathbb{I}\left\{#1\right\}}}
\renewcommand{\det}[1][]{\ifthenelse{\isempty{#1}}{\mathrm{det}}{\text{det}\left(#1\right)}}
\newcommand{\trace}[1][]{\ifthenelse{\isempty{#1}}{\mathrm{tr}}{\text{tr}\left(#1\right)}}
\newcommand{\rank}[1][]{\ifthenelse{\isempty{#1}}{\mathrm{rank}}{\text{rank}\left(#1\right)}}
\newcommand{\diag}[1][]{\ifthenelse{\isempty{#1}}{\mathrm{diag}}{\text{diag}\left(#1\right)}}
\newcommand{\Cov}[1][]{\ifthenelse{\isempty{#1}}{\mathsf{Cov}}{\mathsf{Cov}\left(#1\right)}}
\newcommand{\defeq}{\triangleq}
\newtheorem{proposition}{Proposition}
\newtheorem{remark}{Remark}[section]
\newtheorem{theorem}{Theorem}
\newtheorem{lemma}{Lemma}
\newcounter{enumi_saved}
\renewcommand{\rv}[1]{{\mathrm{#1}}}
\newcommand{\rvVec}[1]{{\pmb{\mathrm{#1}}}}
\newcommand{\rvMat}[1]{{\pmb{\mathsf{#1}}}}
\newcommand{\const}{c_0}
\renewcommand{\SNR}{\mathsf{SNR}}
\renewcommand{\defeq}{:=}
\newcommand{\Yt}{{\tilde{\rvMat{Y}}}}
\newcommand{\Xvt}{{\tilde{\rvVec{X}}}}
\newcommand{\Xt}{{\tilde{\rv{X}}}}
\newcommand{\Zt}{{\tilde{\rvMat{Z}}}}
\title{The Optimal DoF Region for the Two-User Non-Coherent SIMO Multiple-Access Channel} 
\author{\IEEEauthorblockN{Khac-Hoang Ngo\IEEEauthorrefmark{1}\IEEEauthorrefmark{2}, Sheng Yang\IEEEauthorrefmark{1}, Maxime Guillaud\IEEEauthorrefmark{2}}
	\IEEEauthorblockA{\IEEEauthorrefmark{1}LSS, CentraleSup\'elec, 91190 Gif-sur-Yvette, France \\
		\IEEEauthorrefmark{2}Mathematical and Algorithmic Sciences Lab, Paris Research Center, Huawei Technologies, \\92100 Boulogne-Billancourt, France \\
		Email: {\tt \{ngo.khac.hoang, maxime.guillaud\}@huawei.com, sheng.yang@centralesupelec.fr}}
}
\begin{document}
	
\maketitle
\date{\today}
\begin{abstract}
  The optimal degree-of-freedom~(DoF) region of the non-coherent
  multiple-access channels is still unknown in general. In this paper, we make
  some progress by deriving the \emph{entire} optimal DoF region in the case of
  the two-user single-input multiple-output~(SIMO) generic block
  fading channels. The achievability is based on a simple training-based
  scheme. The novelty of our result lies in the converse using a 
  genie-aided bound and the duality upper bound. As a by-product, our
  result generalizes previous proofs for the single-user Rayleigh block
  fading channels. 
\end{abstract}


\section{Introduction}

The fundamental limit of communication over wireless fading channels
depends on the availability of channel state information~(CSI) at the
transmitter/receiver. While the channel statistics are normally stable
and can be assumed to be available, the assumption on {\em
instantaneous} CSI varies with the context. When the instantaneous CSI
is assumed to be {\em a priori} known, e.g., in fixed environments where
it changes slowly and can be estimated accurately at negligible cost, at
least at the receiver side, the communication is said to be {\em coherent}. On the other hand, if the instantaneous CSI is {\em a priori} unknown, e.g, when the estimation cost is not negligible, the communication is said to be {\em non-coherent}.  

In a point-to-point multiple-input multiple-output~(MIMO) channel with
$M$ transmit and $N$ receive antennas, it is well known that the {\em
coherent} capacity scales linearly with the number of antennas as $C\sim
\min\left\{ M, N \right\} \log \SNR$ at high signal-to-noise
ratio~(SNR)~\cite{Foschini,Telatar1999capacityMIMO}. 
In the {\em non-coherent} case with stationary fading, the capacity scales as
$\log\log \SNR + \chi(\rvMat{H}) + o(1)$\footnote{$\chi(\rvMat{H})$ is
called the fading number of the channel.}~\cite{Moser}, implying a DoF
of $0$. Nevertheless, if the channel remains constant during a certain
amount of slots, say $T$ slots, then the DoF becomes strictly positive
as $M^*(1-\frac{M^*}{T})$ where $M^* \defeq \min\left\{ M, N, \lfloor
\frac{T}{2} \rfloor \right\}$. This fading setup is commonly referred to 
as the block fading channel, and has been extensively investigated in
the literature~\cite{Hochwald2000unitaryspacetime, ZhengTse2002Grassman,
Yang2013CapacityLargeMIMO}. 
Remarkably, in the block fading case, the optimal DoF can be achieved
either by well-designed space-time
modulations~\cite{Hochwald2000unitaryspacetime,ZhengTse2002Grassman,
Yang2013CapacityLargeMIMO}, or by simple training-based
strategies~\cite{Hassibi2003howmuchtraining}. The converse in the
aforementioned works was based on the Rayleigh fading assumption, using
either a direct approximation at high SNR~\cite{ZhengTse2002Grassman} or
a duality upper bound with a carefully chosen auxiliary output
distribution~\cite{Yang2013CapacityLargeMIMO}. 

In multi-user MIMO channels, such as the broadcast channels~(BC) and the multiple access channels~(MAC), non-coherent communications have been studied in the block fading case.~
For the BC, the exact DoF region is known with \emph{isotropic} Rayleigh fading (a special case of stochastically degraded BC) and can be achieved with time division multiple access~(TDMA)~\cite{Fadel2016coherencedisparity}. Some achievable schemes have been proposed for the BC with spatially correlated fading~\cite{Hoang2017BC,Zhang2017spatiallyCorrelatedBC}. For the MAC, it has been shown that the optimal sum DoF can be achieved with a training-based scheme~\cite{Fadel2016coherencedisparity}, but the optimal DoF region is still {\em unknown}.

In this work, we make some progress for the non-coherent single-input multiple-output~(SIMO) MAC. Specifically, we derive the optimal DoF region in the case of two single-antenna transmitters (users) and a $N$-antenna receiver in block fading channel with coherence time $T$. When $N=1$, the region is achieved with a simple time division multiplexing between two users. In this case, letting two users cooperate does not help exploit more degrees of freedom and it is optimal to activate only one user at a time to achieve $1-\frac{1}{T}$ DoF for that user. When $N>1$, a training-based scheme can achieve another DoF pair. We let two users send orthogonal pilots for channel estimation in the first $2$ time slot, then send data simultaneously in the remaining $T-2$ time slots. In this way, each user can achieve $1-\frac{2}{T}$ DoF. 

The main technical contribution of this paper lies in the converse
proof. Leveraging the duality upper bound~\cite{Moser}, we carefully
choose an output distribution with which we derive a tight outer bound
on the DoF region. Unlike previous results such as \cite{ZhengTse2002Grassman,
Yang2013CapacityLargeMIMO}, we do not assume the Gaussianity of the
channel coefficients, which makes our proof more general and our results
stronger even in the single-user case.

The remainder of this paper is organized as follows. The system model and preliminaries are presented in Section~\ref{sec:model}. In Section~\ref{sec:result}, we provide the main
result on the optimal DoF region of the two-user MAC, as well as the proof
for the case $N=1$ and the achievability for the case $N>1$. We
introduce the converse proof technique through a new proof for the single-user SIMO channel in Section~\ref{sec:singleUser}, and use it to show the tight outer bound for the case $N>1$ of the MAC in Section~\ref{sec:converse}. Finally, we conclude the paper in Section~\ref{sec:conclusion}. 

{\it Notations:} For random quantities, we use
upper case non-italic letters: normal fonts, e.g., $\rv{X}$, for scalars; bold fonts,
e.g., $\rvVec{V}$, for vectors; and bold and sans serif fonts, e.g.,
$\rvMat{M}$, for matrices. Deterministic quantities are denoted 
with italic letters, e.g., a scalar $x$, a vector $\pmb{v}$, and a
matrix $\pmb{M}$. Throughout the paper, we adopt the column convention
for vectors. The Euclidean norm of a vector and a matrix is denoted by $\|\vv\|$ and $\|\Mm\|$, respectively. 
The 
transpose and conjugated transpose of $\Mm$ is 
$\Mm^\T$ and $\Mm^\H$, respectively. $\rvMat{M}_{[i:j]}$ denotes the sub-matrix containing columns from $i$ to $j$ of a matrix $\rvMat{M}$ (thus $\rvMat{M}_{[i]}$ denotes column $i$). $\diag[x_1,\dots,x_N]$ denotes the diagonal matrix with diagonal entries $x_1,\dots,x_N$. 
$H(.)$, $h(.)$, and $D(.\|.)$ denote the entropy, differential entropy, and Kullback-Leibler divergence, respectively. Logarithms are in base $2$. 
$(x)^+ = \max\{x,0\}$. ``$\defeq$'' means ``is defined as''. $\Gamma(x) = \int_{0}^{\infty}z^{x-1}e^{-z}dz$ is the Gamma function. 
Given two functions $f$ and $g$, we write $f(x) = O(g(x))$ if there exists a constant $c>0$ and some $x_0$ such that $f(x) \le cg(x), \forall x\ge x_0$.

\section{System Model and Preliminaries} \label{sec:model}
We consider a single-input multiple-output~(SIMO) multiple-access
channel in which two single-antenna users send their signals to a
receiver with $N$ antennas. The channel between the users and the
receiver is flat and block fading with equal and synchronous coherence
interval of $T$ symbol periods. That is, the channel vector $\rvVec{H}_k
\in \CC^{N\times1}$, $k=1,2$, remains unchanged during each block of length $T$ symbols and changes independently between blocks. The realizations of $\rvVec{H}_1$ and $\rvVec{H}_2$ are {\em unknown} to both the users and the receiver. 
The received signal during the coherence block~$b$, $b=1,2,\ldots$\footnote{Throughout,
we omit the block index whenever confusion is unlikely.}, is 
\begin{align}
\rvMat{Y}[b] = \rvVec{H}_1[b]\, \rvVec{X}_1^\T[b] + \rvVec{H}_2^\T[b] \, \rvVec{X}_2^\T[b]+ \rvMat{Z}[b],
\end{align}
where $\rvVec{X}_1 \in \CC^{T}$ and $\rvVec{X}_2 \in \CC^{T}$ are the transmitted signals from user 1 and user 2, respectively, with the power constraint 
\begin{align} \label{eq:powerConstraints}
  \frac{1}{B} \sum_{b=1}^B \|\rvVec{X}_{i}[b]\|^2 \le P T, \quad i = 1,2,
\end{align}
where $B$ is the number of the blocks spanned by a codeword. 
We assume that $\rvMat{Z} \in \CC^{N\times T}$ is the additive white
Gaussian noise with independent and identically distributed~(i.i.d.)~$\Cc\Nc(0,1)$ entries. 
The parameter $P$ is the average power ratio between the transmitted signal and the noise, thus we refer to $P$ as the SNR of the channel. 

Since the channel is block memoryless\footnote{The results can be
generalized to stationary fading as done in \cite{Moser}.}, it is well known that a rate pair $(R_1(P),R_2(P))$ in bits per channel
use is achievable at SNR $P$, i.e., lies within the capacity region $\mathcal{C}_{\text{Avg}}(P)$, for the MAC if and only if
\begin{align}
R_1+R_2 &\le \frac{1}{T}I(\rvVec{X}_1, \rvVec{X}_2; \rvMat{Y}), \\
R_1 &\le \frac{1}{T}I(\rvVec{X}_1; \rvMat{Y} | \rvVec{X}_2), \label{eq:boundR1}\\
R_2 &\le \frac{1}{T}I(\rvVec{X}_2; \rvMat{Y} | \rvVec{X}_1), \label{eq:boundR2}
\end{align}
for some input distribution subject to the average power constraint $P$ (as the codeword length $B$ goes to infinity)~\cite{ElGamal}. 
Then, we say that $(d_1,d_2)$ is an achievable DoF pair with
\begin{align}
d_k \defeq \liminf\limits_{P\to\infty} \frac{R_k(P)}{\log(P)}, \quad k = 1,2.
\end{align} 
The optimal DoF region $\Dc_{\text{Avg}}(P)$ is defined as the set of all achievable DoF pairs. 

We assume that the channel vectors $\rvVec{H}_1$ and $\rvVec{H}_2$ are
independent\footnote{Independence is not necessary but makes the
analysis slightly simpler.} and drawn from a generic distribution satisfying the following conditions:
\begin{align}
  h(\rvVec{H}_k) &> -\infty, \quad \E[\| \rvVec{H}_k \|^2] < \infty,
  \quad k=1,2. 
\end{align}%
The following results, whose proofs are provided in Appendix~\ref{app:proofs}, are useful for our main analysis. 
\begin{lemma}\label{lemma:1}
  Let $\Am\in\mathbb{C}^{m\times t}$ have full column rank,
  $\rvMat{W}\in\mathbb{C}^{n\times m}$ be such that
  $h(\rvMat{W})>-\infty$ and $\E[\|\rvMat{W}\|_{\rm F}^2] < \infty$, then we have
  \begin{align}
    h(\rvMat{W}\Am) &= n \log\det(\Am^\H\Am) + \const
  \end{align}%
  where $\const$ is bounded by some constant that only depends on the
  statistics of $\rvMat{W}$.  
\end{lemma}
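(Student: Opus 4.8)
The plan is to reduce the computation of $h(\rvMat{W}\Am)$ to the known behavior of differential entropy under a linear change of variables, after first disposing of the fact that $\Am$ is not square. Write the (thin) QR or SVD factorization $\Am = \Qm\Rm$, where $\Qm\in\mathbb{C}^{m\times t}$ has orthonormal columns ($\Qm^\H\Qm=\Id_t$) and $\Rm\in\mathbb{C}^{t\times t}$ is invertible, with $\det(\Rm^\H\Rm)=\det(\Am^\H\Am)$. Then $\rvMat{W}\Am = (\rvMat{W}\Qm)\Rm$, so by the scaling rule for differential entropy of a complex random matrix under right-multiplication by an invertible matrix,
\begin{align}
  h(\rvMat{W}\Am) &= h(\rvMat{W}\Qm) + n\log\det(\Rm^\H\Rm) = h(\rvMat{W}\Qm) + n\log\det(\Am^\H\Am).
\end{align}
It therefore suffices to show that $h(\rvMat{W}\Qm)$ is bounded above and below by constants depending only on the law of $\rvMat{W}$, uniformly over all isometries $\Qm\in\mathbb{C}^{m\times t}$ (equivalently, over all $\Qm$ with $\Qm^\H\Qm=\Id_t$); we then set $\const := h(\rvMat{W}\Qm)$.

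For the two bounds I would argue as follows. For the upper bound, $\rvMat{W}\Qm$ has second moment $\E[\|\rvMat{W}\Qm\|_{\rm F}^2] \le \E[\|\rvMat{W}\|_{\rm F}^2] < \infty$ since $\Qm$ is an isometry (it does not increase Frobenius norm), and the maximum-entropy property of the Gaussian gives $h(\rvMat{W}\Qm) \le nt\log(\pi e\, \E[\|\rvMat{W}\|_{\rm F}^2]/(nt))$, a constant depending only on the statistics of $\rvMat{W}$. For the lower bound, extend $\Qm$ to a unitary matrix $\Um=[\Qm\ \ \Qm_\perp]\in\mathbb{C}^{m\times m}$; then $\rvMat{W}\Um$ has the same differential entropy as $\rvMat{W}$ (unitary maps preserve $h$), and $\rvMat{W}\Qm$ is obtained from $\rvMat{W}\Um$ by dropping columns. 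Dropping columns can only decrease differential entropy only in a controlled way; more carefully, I would use the chain rule $h(\rvMat{W}\Um) = h(\rvMat{W}\Qm) + h(\rvMat{W}\Qm_\perp \cond \rvMat{W}\Qm)$ together with an upper bound on the conditional term of the form $h(\rvMat{W}\Qm_\perp\cond \rvMat{W}\Qm)\le h(\rvMat{W}\Qm_\perp) \le n(m-t)\log(\pi e\,\E[\|\rvMat{W}\|_{\rm F}^2]/(n(m-t)))$ (again maximum-entropy), which yields $h(\rvMat{W}\Qm) \ge h(\rvMat{W}) - n(m-t)\log(\pi e\,\E[\|\rvMat{W}\|_{\rm F}^2]/(n(m-t)))$, a constant depending only on the law of $\rvMat{W}$ (here we use the hypothesis $h(\rvMat{W})>-\infty$).

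The main obstacle is making the lower bound genuinely uniform in $\Qm$: a priori $h(\rvMat{W}\Qm)$ could degenerate as $\Qm$ varies, and one must ensure the bound does not depend on which $t$-dimensional subspace the columns of $\Qm$ span. The argument above handles this because every step — the maximum-entropy upper bound on the conditional entropy, the invariance of $h$ under the unitary $\Um$, and the finiteness of $h(\rvMat{W})$ — produces constants that depend only on $\E[\|\rvMat{W}\|_{\rm F}^2]$ and $h(\rvMat{W})$, never on $\Qm$. One should also note that the claimed formula is consistent with the hypothesis that $\Am$ has full column rank (so $\Am^\H\Am$ is invertible and $\log\det(\Am^\H\Am)$ is finite), and that the implicit additive constant $\const$ is exactly $h(\rvMat{W}\Qm)$ for any such isometry $\Qm$, which we have just shown is finite and law-of-$\rvMat{W}$-dependent only. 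Collecting the two bounds completes the proof.
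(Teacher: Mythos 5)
Your proof is correct and follows essentially the same route as the paper's: the paper factors $\Am$ via an SVD and writes $h(\rvMat{W}\Am)=h(\rvMat{W}'_{[1:t]})+n\log\det(\Am^\H\Am)$, then bounds $h(\rvMat{W}'_{[1:t]})$ above by the Gaussian maximum-entropy bound and below via subadditivity $h(\rvMat{W}'_{[1:t]})+h(\rvMat{W}'_{[t+1:m]})\ge h(\rvMat{W})>-\infty$, which is exactly your chain-rule-plus-conditioning argument with the isometry coming from a thin QR instead of the SVD. The only differences are cosmetic (QR vs.\ SVD, and you make the uniformity in the isometry explicit), so no changes are needed.
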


\begin{lemma}\label{lemma:2}
  Let $\rv{X}\ge0$ be some random variable such that $\E[\rv{X}]<\infty$
  and $h(\rv{X}/\E[\rv{X}]) > -\infty$. Then, for any $\alpha<1$, 
  \begin{align}
    \E[\log(1+\rv{X})] &\ge \alpha \log(1+\E[\rv{X}]) + \const \label{eq:lemma2}
  \end{align}%
  where $\const>-\infty$ is some constant that only depends on $\alpha$. 
\end{lemma}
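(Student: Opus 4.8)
The plan is to reduce to the unit‑mean case and then apply the non‑negativity of the Kullback--Leibler divergence against a reference density whose logarithm is tailored to match $\log(1+\mu\,\cdot\,)$.

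First I would normalize. Set $\mu\defeq\E[\rv X]$, which is finite and positive by the hypotheses (the entropy condition presupposes $\rv X/\E[\rv X]$ is well defined), and put $\rv U\defeq\rv X/\mu$, so that $\E[\rv U]=1$, $h(\rv U)>-\infty$, and the claim becomes: $\E[\log(1+\mu\rv U)]\ge\alpha\log(1+\mu)+\const$ for every $\mu>0$, with $\const$ not depending on $\mu$. One may also assume $0<\alpha<1$, since for $\alpha\le0$ the bound is immediate from $\log(1+\rv X)\ge0$. Because $h(\rv U)>-\infty$, $\rv U$ has a density $f_{\rv U}$ with $\P(\rv U=0)=0$, and since $\E[\log(1+\mu\rv U)]\le\E[\mu\rv U]=\mu<\infty$ one checks (via the same divergence identity used below) that this differential entropy is in fact finite.

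The heart of the argument: for a parameter $\gamma>0$ consider the probability density $g_\mu(u)\defeq\mu\gamma\,(1+\mu u)^{-(1+\gamma)}$ supported on $(0,\infty)$; the substitution $v=1+\mu u$ shows $\int_0^\infty g_\mu=1$, where $\gamma>0$ is exactly what is needed for integrability at infinity. Since $\log g_\mu(u)=\log(\mu\gamma)-(1+\gamma)\log(1+\mu u)$, we get $\E[\log g_\mu(\rv U)]=\log(\mu\gamma)-(1+\gamma)\E[\log(1+\mu\rv U)]$, and substituting this into
\[
  0 \le D\big(f_{\rv U}\,\big\|\,g_\mu\big) = -h(\rv U) - \E[\log g_\mu(\rv U)]
\]
and rearranging yields $\E[\log(1+\mu\rv U)]\ge\big(h(\rv U)+\log\gamma+\log\mu\big)/(1+\gamma)$. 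Now choose $\gamma\defeq\tfrac1\alpha-1$, which is positive \emph{precisely because} $\alpha<1$ --- this is the only place the hypothesis $\alpha<1$ enters. Then $1/(1+\gamma)=\alpha$, so $\E[\log(1+\mu\rv U)]\ge\alpha\big(h(\rv U)+\log\tfrac{1-\alpha}{\alpha}+\log\mu\big)$.

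Finally I would pass from $\log\mu$ to $\log(1+\mu)$: for $\mu\ge1$ use $\log\mu\ge\log(1+\mu)-1$, and for $0<\mu<1$ use the trivial $\E[\log(1+\mu\rv U)]\ge0\ge\alpha\log(1+\mu)-\alpha\log2$; taking $\const$ to be the minimum of the two resulting constants gives the claim, with $\const$ finite and depending only on $\alpha$ (through $\gamma$) and on $h(\rv X/\E[\rv X])$. The step I expect to be the real obstacle is spotting the right reference density $g_\mu$ --- once one insists that $\log g_\mu$ be an affine function of $\log(1+\mu u)$, its form is forced and the requirement $\alpha<1$ emerges on its own as the feasibility condition for the exponent; the rest is only the routine measure‑theoretic bookkeeping needed to justify the divergence identity (existence of $f_{\rv U}$ and finiteness of all the integrals, so that no $\infty-\infty$ arises).
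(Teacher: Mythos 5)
Your proof is correct and is essentially the paper's own argument: you bound $\E[\log(1+\rv{X})]$ via the non-negativity of the divergence against the reference density $\gamma(1+x)^{-(1+\gamma)}$ with $\gamma=\tfrac{1}{\alpha}-1$ (the paper's $q(x)=(\tfrac{1}{\alpha}-1)(1+x)^{-1/\alpha}$), and then handle $\E[\rv{X}]\le 1$ trivially and $\E[\rv{X}]>1$ via $\log\E[\rv{X}]\ge\log(1+\E[\rv{X}])-1$. Your preliminary normalization to $\rv{U}=\rv{X}/\E[\rv{X}]$ with the scaled density $g_\mu$ is just a change of variables equivalent to the paper's use of $h(\rv{X})=\log\E[\rv{X}]+h(\rv{X}/\E[\rv{X}])$, and like the paper your constant also depends on $h(\rv{X}/\E[\rv{X}])$, not on $\alpha$ alone.
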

From the above result, we observe that 
when $\E[\rv{X}]\to\infty$,
$\frac{\E[\log(1+\rv{X})]}{\log(1+\E[\rv{X}])} \approx 1$ since we can
let $\alpha$ be arbitrarily close to $1$. The upper bound is simply from
Jensen's inequality. 

If the support of the input distribution is further bounded such that $\|\rvVec{X}_i\|^2 \le P$, $i=1,2$, then we say that the input satisfies the peak power constraint $P$. In this case, the capacity region and DoF region are denoted $\Cc_{\text{Peak}}(P)$ and $\Dc_{\text{Peak}}(P)$, respectively. Since the peak power constraint implies the average power constraint, we have that
\begin{align}
\Cc_{\text{Peak}}(P) \subseteq \Cc_{\text{Avg}}(P), \quad \Dc_{\text{Peak}}(P) \subseteq \Dc_{\text{Avg}}(P).
\end{align}

\begin{lemma}\label{lemma:3}
  For any rate pair $(R_1,R_2)$ achievable under the
  average power constraint $P$, for any $\beta\!>\!1$, there exists $(R_1',\!R_2')$ achievable
  under the peak power constraint $P^\beta$, such that 
  \begin{align}
    R_k - R_k' = O(P^{1-\beta} \log P^{\beta}), \quad k=1,2, \label{eq:boundRpeak}
  \end{align}%
  In short, 
  \begin{align}
    \mathcal{C}_{\text{Avg}}(P) &\subseteq \mathcal{C}_{\text{Peak}}(P^\beta) + O(P^{1-\beta} \log P^{\beta}), \quad \forall\,\beta>1. \label{eq:boundCpeak}
  \end{align}%
\end{lemma}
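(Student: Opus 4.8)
The plan is to work with the single-block characterization recalled above: $(R_1,R_2)\in\mathcal{C}_{\text{Avg}}(P)$ exactly when there is a product input law $p(\xv_1)p(\xv_2)$ with $\E[\|\rvVec{X}_i\|^2]\le PT$ whose ``pentagon''---the region cut out by the three inequalities on $I(\rvVec{X}_1,\rvVec{X}_2;\rvMat{Y})$, $I(\rvVec{X}_1;\rvMat{Y}\mid\rvVec{X}_2)$, $I(\rvVec{X}_2;\rvMat{Y}\mid\rvVec{X}_1)$---contains $(R_1,R_2)$. Given such a law, I would condition it on the event that neither transmit vector overshoots the peak level $P^\beta$, observe that this yields a legitimate peak-power-$P^\beta$ input, and show that its pentagon differs from the original one by only $O(P^{1-\beta}\log P^\beta)$ in each coordinate.

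\textbf{Setup.} Let $E\defeq\ind{\|\rvVec{X}_1\|^2>P^\beta \text{ or } \|\rvVec{X}_2\|^2>P^\beta}$. Since $\E[\|\rvVec{X}_i\|^2]\le PT$, Markov's inequality gives $\P[E=1]\le q\defeq 2TP^{1-\beta}\to 0$. Let $(\rvVec{X}_1',\rvVec{X}_2')$ be $(\rvVec{X}_1,\rvVec{X}_2)$ conditioned on $E=0$; because $E$ is a deterministic function of $(\rvVec{X}_1,\rvVec{X}_2)$ and the inputs are independent, this conditioning preserves the product form, and $\|\rvVec{X}_i'\|^2\le P^\beta$ almost surely, so $(\rvVec{X}_1',\rvVec{X}_2')$ obeys the peak power constraint $P^\beta$; let $\rvMat{Y}'$ be its output. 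The key observation is that, since the channel law $\rvMat{Y}\mid(\rvVec{X}_1,\rvVec{X}_2)$ does not involve $E$, the joint law of $(\rvVec{X}_1,\rvVec{X}_2,\rvMat{Y})$ conditioned on $E=0$ is exactly that of $(\rvVec{X}_1',\rvVec{X}_2',\rvMat{Y}')$.

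\textbf{Comparing the three mutual informations.} For each of the three quantities I would introduce $E$ as an auxiliary. For the sum term, $I(\rvVec{X}_1,\rvVec{X}_2;\rvMat{Y})\le H(E)+I(\rvVec{X}_1,\rvVec{X}_2;\rvMat{Y}\mid E)\le H(E)+I(\rvVec{X}_1',\rvVec{X}_2';\rvMat{Y}')+\P[E=1]\,I(\rvVec{X}_1,\rvVec{X}_2;\rvMat{Y}\mid E=1)$, where the $E=0$ slice was replaced using the identity above and $\P[E=0]\le 1$. Here $H(E)=H_b(\P[E=1])\le q\log(e/q)=O(P^{1-\beta}\log P^\beta)$. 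For the $E=1$ slice, $I(\rvVec{X}_1,\rvVec{X}_2;\rvMat{Y}\mid E=1)\le h(\rvMat{Y}\mid E=1)-h(\rvMat{Z})$ (the conditional entropy $h(\rvMat{Y}\mid\rvVec{X}_1,\rvVec{X}_2,E=1)\ge h(\rvMat{Z})$ since $\rvMat{Z}$ is additive independent Gaussian noise), and the Gaussian maximum-entropy bound gives $h(\rvMat{Y}\mid E=1)\le NT\log\bigl(\pi e\,\E[\|\rvMat{Y}\|^2\mid E=1]/(NT)\bigr)$. Since $\P[E=1]\,\E[\|\rvMat{Y}\|^2\mid E=1]=\E[\|\rvMat{Y}\|^2\ind{E=1}]\le C(PT+NT)$ for a constant $C$ depending only on $\E[\|\rvVec{H}_i\|^2]$ (expand $\|\rvMat{Y}\|^2$ and use the independence of $\rvVec{H}_i,\rvMat{Z}$ from the inputs together with $\E[\|\rvVec{X}_i\|^2]\le PT$), concavity of the logarithm then yields $\P[E=1]\,I(\rvVec{X}_1,\rvVec{X}_2;\rvMat{Y}\mid E=1)=O\bigl(q\log(1/q)+q\log P\bigr)=O(P^{1-\beta}\log P^\beta)$. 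The conditional mutual informations $I(\rvVec{X}_1;\rvMat{Y}\mid\rvVec{X}_2)$ and $I(\rvVec{X}_2;\rvMat{Y}\mid\rvVec{X}_1)$ are handled in the same way, conditioning on $E$ inside the conditioning on the other input. Thus each pentagon-defining quantity of $(\rvVec{X}_1',\rvVec{X}_2')$ exceeds the corresponding one of $(\rvVec{X}_1,\rvVec{X}_2)$ by at most $T\varepsilon$ with $\varepsilon=O(P^{1-\beta}\log P^\beta)$, so $\bigl((R_1-\varepsilon)^+,(R_2-\varepsilon)^+\bigr)$ lies in the pentagon of $(\rvVec{X}_1',\rvVec{X}_2')$ and is achievable under peak power $P^\beta$; this gives \eqref{eq:boundRpeak}, and \eqref{eq:boundCpeak} follows.

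\textbf{Main obstacle.} The delicate point is the $E=1$ slice: on that event the transmit powers are controlled only in expectation, not pointwise, so $I(\rvVec{X}_1,\rvVec{X}_2;\rvMat{Y}\mid E=1)$ cannot be bounded by a deterministic constant. The resolution is that this term grows only logarithmically in the received power while $\P[E=1]$ decays polynomially and the product $\P[E=1]\,\E[\|\rvMat{Y}\|^2\mid E=1]$ stays bounded, so Jensen's inequality converts the whole contribution into the desired $O\bigl(q\log(1/q)+q\log P\bigr)$ rate loss. Controlling $H(E)$ and invoking the monotonicity of $x\mapsto x\log(1/x)$ near $0$ (so that the bounds hold uniformly over all $\P[E=1]\le q$) are the only remaining points needing care, and they are harmless for $P$ large, which is all the $O(\cdot)$ claim concerns.
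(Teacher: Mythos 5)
Your proposal is correct and follows the same skeleton as the paper's proof: truncate the input laws at $\|\rvVec{X}_i\|^2\le P^\beta$ (your conditioning on $E=0$ is exactly the paper's $\underline{\rvVec{X}}_i$, and your $E$ is its genie variable $\rv{V}$), control $\Pr[E=1]$ by Markov's inequality, and absorb the overshoot slice into an $O(P^{1-\beta}\log P^\beta)$ rate loss. The only substantive differences are in how the residual terms are bounded: the paper handles the $E=1$ slice by giving CSI and removing the other user, reducing to a coherent single-user bound $N\log(1+\E[\|\rvVec{X}_1\|^2\mid \rv{V}=1])$ together with a bound on that conditional expectation, whereas you bound it through the Gaussian maximum-entropy bound on $h(\rvMat{Y}\mid E=1)$ and the average received power, which avoids manipulating conditional input moments; and you bound the indicator's cost by the binary entropy $H_b(\Pr[E=1])=O(P^{1-\beta}\log P^\beta)$, whereas the paper simply uses $H(\rv{V})\le 1$ bit, a constant that strictly speaking only yields an $O(1)$ gap (harmless for the DoF use of the lemma, but your version matches the stated \eqref{eq:boundRpeak} more cleanly). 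One wording slip: you say the truncated mutual informations ``exceed'' the original ones by at most $T\varepsilon$, while what your chain actually establishes (and what is needed) is that they fall short of the original ones by at most $T\varepsilon$; the subsequent pentagon argument with $\bigl((R_1-\varepsilon)^+,(R_2-\varepsilon)^+\bigr)$ is fine as written.
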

Since the pre-log of the gap $P^{1-\beta} \log P^{\beta}$ is vanishing at high SNR for any $\beta>1$, we have the DoF region
\begin{align}
  \mathcal{D}_{\text{Avg}}(P) &\subseteq \mathcal{D}_{\text{Peak}}(P^\beta) 
  \subseteq \mathcal{D}_{\text{Avg}}(P^\beta)
  , \quad \forall\,\beta>1. 
\end{align}%
Letting $\beta$ arbitrarily close to $1$, we conclude that using the peak power
constraint instead of the average power constraint does not change the
optimal DoF region. We therefore consider throughout the peak power
constraint, which can simplify considerably the analysis. 

\begin{lemma} \label{lemma:distributionR}
  Let $\rvVec{Y} \in \CC^{N}$ be a vector-valued random variable with distribution $\Pc$. Consider another family of distributions $\Rc$ whose densities are given by
	\begin{align}
	r_\rvVec{Y}(\yv) = \frac{\Gamma(N) |\det \pmb{A}|^2}{\pi^{N} \beta^\alpha \Gamma(\alpha)} \|\pmb{A} \yv\|^{2(\alpha-N)} \exp\left(-\frac{\|\pmb{A} \yv\|^2}{\beta}\right), \label{eq:distR}
	\end{align}
	for $\yv \in \CC^{N}$, where $\alpha, \beta > 0$, $\pmb{A}$ is any nonsingular deterministic $N\times N$ complex matrix. When $\beta = \E_\Pc[\|\pmb{A} \rvVec{Y}\|^2]$ and $\alpha = 1/\log(\beta) = 1/\log(\E_\Pc[\|\pmb{A} \rvVec{Y}\|^2])$, denote this distribution as $\Rc(N,\pmb{A})$. In this case,
	\begin{multline}
	\E_\Pc[-\log(r_\rvVec{Y}(\rvVec{Y}))] = -\log |\det\pmb{A}|^2 + N \E_\Pc[\log \|\pmb{A} \rvVec{Y}\|^2] \\+ O(\log\log(\E[\|\Am\rvVec{Y}\|^2])). \label{eq:boundDistR}
	\end{multline}
\end{lemma}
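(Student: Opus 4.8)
The plan is to evaluate $-\log r_\rvVec{Y}(\yv)$ in closed form from \eqref{eq:distR}, integrate it against $\Pc$ term by term, and then substitute the prescribed $\beta=\E_\Pc[\|\pmb{A}\rvVec{Y}\|^2]$ and $\alpha=1/\log\beta$; at that point the claimed identity \eqref{eq:boundDistR} reduces to a short estimate on a Gamma function.

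Taking logarithms in \eqref{eq:distR} gives
\begin{multline}
-\log r_\rvVec{Y}(\yv) = c_N - \log|\det\pmb{A}|^2 + \alpha\log\beta + \log\Gamma(\alpha) \\
{}- (\alpha-N)\log\|\pmb{A}\yv\|^2 + \tfrac{\log e}{\beta}\|\pmb{A}\yv\|^2,
\end{multline}
where $c_N \defeq N\log\pi - \log\Gamma(N)$ depends only on $N$. I would then take $\E_\Pc[\cdot]$: the last term becomes exactly $\log e$ by the choice of $\beta$, the term $\alpha\log\beta$ equals $1$ by the choice of $\alpha$, and I split $-(\alpha-N)\E_\Pc[\log\|\pmb{A}\rvVec{Y}\|^2] = N\E_\Pc[\log\|\pmb{A}\rvVec{Y}\|^2] - \alpha\E_\Pc[\log\|\pmb{A}\rvVec{Y}\|^2]$. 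Collecting, $\E_\Pc[-\log r_\rvVec{Y}(\rvVec{Y})] = -\log|\det\pmb{A}|^2 + N\E_\Pc[\log\|\pmb{A}\rvVec{Y}\|^2] + R$ with the remainder $R \defeq c_N + \log e + 1 + \log\Gamma(\alpha) - \alpha\E_\Pc[\log\|\pmb{A}\rvVec{Y}\|^2]$, so everything comes down to showing $R = O(\log\log\beta)$.

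Only two pieces of $R$ move with $\beta$. Since $\alpha = 1/\log\beta \to 0$ and $\Gamma(z) = \Gamma(z+1)/z \sim 1/z$ as $z\to 0^+$, we have $\log\Gamma(\alpha) = \log(1/\alpha) + O(1) = \log\log\beta + O(1)$, which supplies the announced term. For $\alpha\E_\Pc[\log\|\pmb{A}\rvVec{Y}\|^2]$, Jensen's inequality gives $\E_\Pc[\log\|\pmb{A}\rvVec{Y}\|^2] \le \log\E_\Pc[\|\pmb{A}\rvVec{Y}\|^2] = \log\beta$, hence $\alpha\E_\Pc[\log\|\pmb{A}\rvVec{Y}\|^2] \le 1$; a matching lower bound holds because $\E_\Pc[\log\|\pmb{A}\rvVec{Y}\|^2]$ is bounded below (via Lemma~\ref{lemma:2}, or, in the regime where this lemma is invoked, because $\rvVec{Y}$ carries an additive Gaussian component, which keeps $\E_\Pc[\log\|\pmb{A}\rvVec{Y}\|^2]$ above a constant depending only on $\pmb{A}$), so this contribution is $O(1)$. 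Therefore $R = \log\log\beta + O(1) = O(\log\log\beta)$, giving \eqref{eq:boundDistR}. The only mildly delicate step is the last one — the lower control on $\E_\Pc[\log\|\pmb{A}\rvVec{Y}\|^2]$ — while the rest is just bookkeeping with the normalizing constant of \eqref{eq:distR}.
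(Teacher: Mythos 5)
Your proof is correct and follows essentially the same route as the paper's: expand $-\log r_\rvVec{Y}$, substitute the prescribed $\beta$ and $\alpha=1/\log\beta$, use $\log\Gamma(\alpha)=\log\log\beta+O(1)$, and control the $\alpha\,\E_\Pc[\log\|\pmb{A}\rvVec{Y}\|^2]$ term via Jensen. The only difference is that where you justify the lower control on $\E_\Pc[\log\|\pmb{A}\rvVec{Y}\|^2]$ (via Lemma~\ref{lemma:2} or the additive Gaussian component of the output), the paper simply asserts $0<\E_\Pc[\log\|\pmb{A}\rvVec{Y}\|^2]/\log\E_\Pc[\|\pmb{A}\rvVec{Y}\|^2]<1$, so your handling of that point is, if anything, slightly more careful.
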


If we take $\rvVec{Y}$ as the channel output, as long as $\E[\|\Am\rvVec{Y}\|^2] \le P^{c_0}$ for any constant $c_0$ whose value only depends on the channel statistics, the term $O(\log\log(\E[\|\Am\rvVec{Y}\|^2]))$ scales double-logarithmically with $P$. 
Therefore, in the DoF sense, it is enough to consider only the first two terms in~\eqref{eq:boundDistR}.

\section{Main Result} \label{sec:result}
The main finding of this paper is the optimal DoF region of the MAC described above, as stated in Theorem~\ref{theo:DoFregion}.
\begin{theorem} \label{theo:DoFregion}
	For the non-coherent multiple-access channel with two single-antenna transmitters and a $N$-antenna receiver in flat and block fading with coherence time $T$, the optimal DoF region is characterized by
	\begin{align}
	d_1 + d_2 \le 1-\frac{1}{T},
	\end{align}
	if $T \le 2$ or $N=1$, and
	\begin{align}
	\frac{d_1}{T-2}+d_2 &\le 1-\frac{1}{T}, \label{eq:DoFbound1}\\
	d_1+\frac{d_2}{T-2} &\le 1-\frac{1}{T}, \label{eq:DoFbound2}
	\end{align}
	otherwise. 
\end{theorem}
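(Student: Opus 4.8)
\emph{Proof plan.} The statement has an achievability direction and a converse direction; I would treat them separately, with essentially all of the difficulty in the converse.

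\emph{Achievability.} When $T\le 2$ or $N=1$ the claimed region is the triangle $\{\,d_1+d_2\le 1-\tfrac1T\,\}$, which I would obtain by time-sharing the two single-user operating points $(1-\tfrac1T,0)$ and $(0,1-\tfrac1T)$: keeping only user~$k$ active reduces the link to a single-transmit-antenna non-coherent block-fading channel, for which one pilot slot followed by $T-1$ data slots achieves pre-log $1-\tfrac1T$. When $T>2$ and $N>1$ the claimed region is the quadrilateral with vertices $(0,0)$, $(1-\tfrac1T,0)$, $(1-\tfrac2T,1-\tfrac2T)$, $(0,1-\tfrac1T)$; besides the two single-user corners I would reach $(1-\tfrac2T,1-\tfrac2T)$ with the two-user training scheme: orthogonal pilots on slots $1,2$ and simultaneous data on slots $3,\dots,T$, after which (since $N\ge 2$) the receiver estimates $[\rvVec{H}_1\ \rvVec{H}_2]\in\CC^{N\times 2}$ from the pilots and decodes both streams coherently, giving each user pre-log $\tfrac{T-2}{T}$. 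Time-sharing among these points fills the quadrilateral.

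\emph{Converse --- reductions and building block.} By Lemma~\ref{lemma:3} it suffices to prove the outer bounds under the peak power constraint $\|\rvVec{X}_i\|^2\le P$. The raw bounds are the MAC region \eqref{eq:boundR1}--\eqref{eq:boundR2} together with $R_1+R_2\le\tfrac1T I(\rvVec{X}_1,\rvVec{X}_2;\rvMat{Y})$, so everything comes down to bounding a few mutual informations in the DoF scale. The basic building block, which I would establish first (this is Section~\ref{sec:singleUser}), is the single-user estimate $I(\rvVec{X};\rvMat{Y})\le (T-1)\log P+O(\log\log P)$ for a non-coherent single-antenna-transmitter, $N$-antenna-receiver block: condition on one column of $\rvMat{Y}$ (a noisy copy of $\rvVec{H}$) so that the remaining $T-1$ columns form a coherent SIMO channel, then apply the duality upper bound with the auxiliary output law $\Rc(N,\pmb{A})$ of Lemma~\ref{lemma:distributionR} slot-by-slot, choosing $\pmb{A}$ from the channel estimate so that \eqref{eq:boundDistR} makes each slot contribute $\log P+O(\log\log P)$; Lemmas~\ref{lemma:1} and \ref{lemma:2} dispatch the conditional-entropy and $\E[\log(1+\cdot)]$ bookkeeping, and only $h(\rvVec{H})>-\infty$ and $\E[\|\rvVec{H}\|^2]<\infty$ are used. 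Feeding this into $R_k\le\tfrac1T I(\rvVec{X}_k;\rvMat{Y}\cond\rvVec{X}_{3-k},\rvVec{H}_{3-k})$ gives $d_k\le 1-\tfrac1T$, and the same argument applied to the aggregate received signal --- which behaves as a single-stream non-coherent channel when $N=1$, and trivially so when $T\le 2$ --- gives $I(\rvVec{X}_1,\rvVec{X}_2;\rvMat{Y})\le (T-1)\log P+O(\log\log P)$, hence $d_1+d_2\le 1-\tfrac1T$; this settles the case $T\le2$ or $N=1$.

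\emph{Converse --- the hard facet.} It remains to prove \eqref{eq:DoFbound1}--\eqref{eq:DoFbound2} for $T>2$, $N>1$; by symmetry I focus on $\tfrac{d_1}{T-2}+d_2\le 1-\tfrac1T$, the facet joining the corner $(0,1-\tfrac1T)$ to the training corner $(1-\tfrac2T,1-\tfrac2T)$. A short LP check shows this facet is strictly tighter than every combination of the individual bounds $d_k\le 1-\tfrac1T$ and even of the cooperative $2\times N$ MIMO sum bound $d_1+d_2\le 2-\tfrac4T$ --- any such combination overshoots by $\Theta(\tfrac1T)$ --- so a genuinely two-user argument is needed, and this is where the genie-aided bound plus duality come in. The plan is to give the receiver user~1's codeword $\rvVec{X}_1$ (so that user~2 faces a rank-one signal in the additive term $\rvVec{H}_1\rvVec{X}_1^\T+\rvMat{Z}$, a rank-one interferer of unknown direction plus white noise), together with whatever further side information is needed to turn $R_1\le\tfrac1T I(\rvVec{X}_1;\rvMat{Y}\cond\rvVec{X}_2)$ and $I(\rvVec{X}_2;\rvMat{Y}\cond\rvVec{X}_1)$ into a single non-coherent observation, and then to bound that observation with an $\Rc(N,\pmb{A})$-type law \eqref{eq:distR} applied on $T-2$ ``data-like'' slots, with $\pmb{A}$ (slot-dependent) whitening the effective interference-plus-noise and the remaining two slots charged to locating the two unknown channel directions. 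I expect the main obstacle to be precisely this construction: choosing the genie and the matrices $\pmb{A}$ so that, after invoking \eqref{eq:boundDistR} and Lemmas~\ref{lemma:1}--\ref{lemma:2}, the bound collapses up to $O(\log\log P)$ to exactly $\tfrac{(T-1)(T-2)}{T}\log P$ on $R_1+(T-2)R_2$ --- i.e.\ so that the $\Theta(\log P)$ slack present in all naive bounds is eliminated. Intersecting the resulting half-planes with $d_1,d_2\ge 0$ then yields exactly the region of Theorem~\ref{theo:DoFregion}, matching the achievability.
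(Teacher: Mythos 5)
Your achievability argument and your general framing of the converse (reduction to peak power via Lemma~\ref{lemma:3}, duality with the auxiliary family $\Rc(N,\pmb{A})$ of Lemma~\ref{lemma:distributionR}, genie-aided side information, and a weighted-sum bound on $R_1+(T-2)R_2$ for the facet \eqref{eq:DoFbound1}) match the paper's architecture, but the converse for the facets is not actually proved: you stop at ``choosing the genie and the matrices $\pmb{A}$ so that \dots the bound collapses,'' and that construction is precisely the paper's technical contribution. Concretely, the paper does \emph{not} need any genie beyond a few bits: it starts from the standard MAC bound $R_1\le\frac1T I(\rvVec{X}_1;\rvMat{Y}\cond\rvVec{X}_2)$ (conditioning on the other user's codeword is already in \eqref{eq:boundR1}, not a genie), rotates the output by the unitary $\rvMat{U}$ that diagonalizes $\rvVec{X}_2^*\rvVec{X}_2^\T$ so that all of user~2's interference sits in the last rotated column, and then gives the receiver only the index $\rv{V}$ of the strongest among the first $T-1$ entries of $\Xvt_1$ (cost $\log(T-1)$ bits). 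The auxiliary law is then specified column-wise: $\Yt_{[\rv{V}]}\sim\Rc(N,\Id_N)$ and, conditioned on it, \eqref{eq:auxDist_i}--\eqref{eq:auxDist_T}, with the last column whitened by $(1+\|\rvVec{X}_2\|^2)\Id_N+\Yt_{[\rv{V}]}\Yt_{[\rv{V}]}^\H$. This produces the explicit function $f$ in \eqref{eq:f}, and the facet emerges only after the exponent optimization \eqref{eq:weighted-sum-dof} with weights $(1,\frac1{T-2})$ and $(\frac1{T-2},1)$; none of this bookkeeping is replaceable by the qualitative statement that ``two slots are charged to the channel directions.'' Your single-user sketch has the same omission in miniature: you ``condition on one column of $\rvMat{Y}$,'' but which column can serve as a pilot depends on the (unknown to the receiver) input power profile --- this is exactly why the paper introduces the genie index $\rv{V}$ of the strongest input entry before applying duality.

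A second, independent gap is the case $3\le T\le N$: the paper shows that the natural training-inspired auxiliary distribution above is provably insufficient there (the resulting bound \eqref{eq:weighted-sum-dof} exceeds the target when $\eta_{1T}>\max\{\bar\eta_1,\eta_2\}$), and it must be repaired by enlarging the genie to a pair $(\rv{V},\rv{U})$, where $\rv{V}$ now ranges over all $T$ SNR-normalized entries and $\rv{U}$ flags the problematic power configuration, together with a modified covariance containing the factor $\frac{P}{\|\Yt_{[\rv{V}]}\|^2}$ (compare \eqref{eq:tmp636} with the $u=1$ branch), leading to the function $g$ in \eqref{eq:g}. Your proposal does not anticipate that the generic construction fails in this regime, so even if you completed the $T\ge N+1$ computation your argument would not cover all $(T,N)$ claimed by the theorem. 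Finally, a minor point: for $N=1$ the cooperative sum-rate bound is a $2\times1$ non-coherent MISO converse (the paper cites prior work for it), not literally ``a single-stream channel,'' so that step also needs a reference or an argument rather than an assertion.
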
 
\begin{remark}
	When $T\to \infty$, the optimal DoF region approaches the region in the coherent case: $d_1+d_2 \le 1$ if $N=1$, and $\max\{d_1,d_2\}\le 1$ if $N>1$ (as shown in~Figure~\ref{fig:DoF_SIMO_MAC_2users}). 
\end{remark}	
\begin{figure}[!h] 
	\centering
	\includegraphics[width=.4\textwidth]{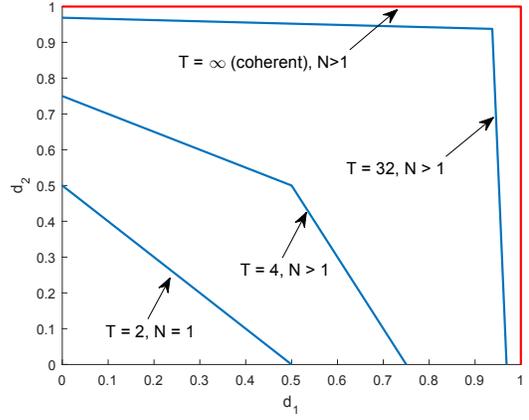}
	\caption{The optimal DoF region of two-user SIMO MAC with $N$ receive antennas in block fading with coherence time $T$.}
	\label{fig:DoF_SIMO_MAC_2users}
\end{figure}

The case $T = 1$~(stationary fading) is trivial: zero DoF is achievable, even if two users cooperate~\cite{Moser}. If $T = 2$ or $N=1$, the optimal DoF region is achieved with time division multiplexing between the users, noting that the active user can achieve $1-\frac{1}{T}$ DoF by either a training-based scheme~\cite{Hassibi2003howmuchtraining} or unitary space-time modulations~\cite{Hochwald2000unitaryspacetime,ZhengTse2002Grassman}. The tight outer bound follows by letting two users cooperate, then according to~\cite{ZhengTse2002Grassman,Yang2013CapacityLargeMIMO}, it is optimal to use $\min\{2,N,\lfloor \frac{T}{2}\rfloor\} = 1$ transmit antenna and achieve $1-\frac{1}{T}$ DoF in total. 


When $T\ge 3, N>1$, the region is the convex hull of the origin and three points: $\left(1-\frac{1}{T}, 0\right)$, $\left(0,1-\frac{1}{T}\right)$, and $\left(1-\frac{2}{T},1-\frac{2}{T}\right)$. The first two points are achieved by activating only one user. The third point is achieved with a training-based scheme: let two users send orthogonal pilots in the first two time slot for the receiver to learn their channel, then send data in the remaining $T-2$ time slots. The region is then achieved with time sharing between these points. It remains to show the tight outer bound for this case $T\ge 3, N>1$, but before that, let us introduce the proof technique by using it for a new proof of the tight DoF for the single-user SIMO channel in the next section.

\section{Single-User SIMO Channel Revisited} \label{sec:singleUser}
Consider the single-user (point-to-point) SIMO channel with block fading
with coherence time $T$ 
\begin{align}
\rvMat{Y} = \rvVec{H}\,\rvVec{X}^\T + \rvMat{Z},
\end{align}
where we have the same assumptions as in the MAC channel. 
It was shown that the DoF of this channel is $1-\frac{1}{T}$ and can be
achieved with either a training-based
scheme~\cite{Hassibi2003howmuchtraining} or well-designed space time
modulations~\cite{Hochwald2000unitaryspacetime,ZhengTse2002Grassman,Yang2013CapacityLargeMIMO}.
For the converse of the high SNR capacity (which implies the converse of the
DoF), while $h(\rvMat{Y} | \rvVec{X})$ can be calculated easily, the
upper bound for $h(\rvMat{Y})$ is much more involved
~\cite{ZhengTse2002Grassman,Yang2013CapacityLargeMIMO}. In this section,
we provide a simpler proof for the converse of the DoF using the duality approach as in~\cite{Yang2013CapacityLargeMIMO} but with a simple choice of auxiliary output distribution.

First, let us define the random variable $\rv{V}$ as the index of the
strongest input component, i.e.,\footnote{When there are more than one such components, we pick an arbitrary one.}
\begin{align}
\rv{V} \defeq \arg\max_{i=1,2,\dots,T}
|\rv{X}_i|^2.
\end{align} 
Thus, $\rv{X}_\rv{V}$ denotes the entry in $\rvVec{X}$ with the largest magnitude. Let the genie give $\rv{V}$ to the
receiver,\footnote{This technique of giving the index of the strongest input
component to the receiver was initially proposed in~\cite{ShengIT2017phasenoise} for phase noise channel.} we have
\begin{align}
I(\rvVec{X};\rvMat{Y}) &\le I(\rvVec{X};\rvMat{Y}, \rv{V})\\
&= I(\rvVec{X};\rvMat{Y} | \rv{V}) + I(\rvVec{X}; \rv{V} ) \\
&\le h(\rvMat{Y} | \rv{V}) - h(\rvMat{Y} | \rvVec{X}, \rv{V}) + H(\rv{V}) \\
&\le h(\rvMat{Y} | \rv{V}) - h(\rvMat{Y} | \rvVec{X}) + \log(T), \label{eq:boundRp2p}
\end{align}
where the last inequality is because we have the Markov chain $\rv{V}
\leftrightarrow \rvVec{X} \leftrightarrow \rvMat{Y}$ and $H(\rv{V}) \le
\log(T)$. For a given $\rvVec{X}$, we can apply Lemma~\ref{lemma:1} with
$\rvMat{W}=[\rvMat{H}\ \rvMat{Z}]$ and $\Am=\left[ 
  \rvVec{X} \ \Id_T \right]^\T$ to 
obtain
\begin{align}
h(\rvMat{Y} | \rvVec{X}) &= N\E[\log\det(\Id_T + \rvVec{X}^*\rvVec{X}^\T)]
+ O(1) \\
&= N\E[\log(1 + \|\rvVec{X}\|^2)] + O(1).
\end{align}
To bound $h(\rvMat{Y} | \rv{V})$, we use the duality approach~\cite{Moser} as follows
\begin{align}
h(\rvMat{Y} | \rv{V}) &= \E[-\log p(\rvMat{Y} | \rv{V})] \notag\\
&= \E[-\log q(\rvMat{Y} | \rv{V})] - \E_{\rv{V}}[D(\Pc_{\rvMat{Y} | \rv{V} = v} \| \Qc)] \notag \\
&\le \E[-\log q(\rvMat{Y} | \rv{V})],
\end{align}
due to the non-negativity of the Kullback-Leibler divergence
$D(\Pc_{\rvMat{Y} | \rv{V} = v} \| \Qc)$. Here, conditioned on $\rv{V}$,
the distribution $\Pc_{\rvMat{Y} | \rv{V}}$ with probability density function (pdf) $p(.)$ is imposed by the input,
channel, and noise distributions, while $\Qc$ is
any distribution in $\CC^{N\times T}$ with the pdf $q(.)$. 
Note that a proper choice of $\Qc$ 
is the key to a tight upper bound. Our choice is inspired by a
training-based scheme. Specifically, if we send a pilot symbol at time
slot~$v\in\{1,\ldots,T\}$, then the output vector being the sum of
$\rvVec{H}$ and $\rvMat{Z}_{[v]}$ should have comparable power in each
direction since $\rvVec{H}$ is generic by assumption. Therefore, it is
reasonable~(in the DoF sense) to let $\rvMat{Y}_{[v]} \sim
\Rc(N,\Id_N)$, where the family of distributions $\Rc(N,\pmb{A})$ is
defined in Lemma~\ref{lemma:distributionR}. Now, 
$\rvMat{Y}_{[v]}$ should provide a rough estimate of the direction of
the channel vector $\rvVec{H}$. Based on such an observation, it is also
reasonable to assume that, given $\rvMat{Y}_{[v]}$, all other
$\rvMat{Y}_{[i]}$, $i\ne v$, are mutually independent and follow
\begin{align}
\rvMat{Y}_{[i]} &\sim \Rc\left(N,\left(\Id_N+ \rvMat{Y}_{[v]}\rvMat{Y}_{[v]}^\H\right)^{-\frac{1}{2}}\right), \quad \forall i \ne v.
\end{align}
We thus obtain a ``guess'' of the auxiliary joint distribution
$\Qc_{\rvMat{Y}| \rv{V} = v}$.  
\begin{proposition} \label{prop:meanLogQ_p2p}
	With the above choice of auxiliary output distribution, it follows that
	\begin{multline}
	\E[-\log q(\rvMat{Y} | \rv{V})] \le (N+T-1)\E[\log(1+|\rv{X}_\rv{V}|^2)] \\+ N\E[\sum_{i=1,i\ne \rv{V}}^{T}\log\left(1+\frac{|\rv{X}_i|^2}{1+|\rv{X}_\rv{V}|^2}\right)] + O(\log\log P). \label{eq:propp2p}
	\end{multline}
\end{proposition}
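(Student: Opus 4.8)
The plan is to factor the auxiliary density $q(\rvMat{Y}\cond\rv{V})$ according to the chosen structure and evaluate $\E[-\log q(\rvMat{Y}\cond\rv{V})]$ term by term, conditioning on $\rv{V}=v$. By construction, $q(\rvMat{Y}\cond\rv{V}=v) = r_{\rvVec{Y}_{[v]}}(\rvMat{Y}_{[v]}) \prod_{i\ne v} r_{\rvVec{Y}_{[i]}}(\rvMat{Y}_{[i]} \cond \rvMat{Y}_{[v]})$, where the first factor is the density of $\Rc(N,\Id_N)$ and each remaining factor is the density of $\Rc\bigl(N,(\Id_N+\rvMat{Y}_{[v]}\rvMat{Y}_{[v]}^\H)^{-1/2}\bigr)$. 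So $\E[-\log q(\rvMat{Y}\cond\rv{V})]$ splits into the expected self-information of $\rvMat{Y}_{[v]}$ under $\Rc(N,\Id_N)$ plus the sum over $i\ne v$ of the expected conditional self-informations. Each of these is exactly of the form handled by Lemma~\ref{lemma:distributionR}, so the main work is to identify the matrix $\pmb{A}$ and the resulting $\E[\log\|\pmb{A}\rvVec{Y}\|^2]$ term in each case and then bound those logarithmic-moment terms by the quantities appearing on the right-hand side of~\eqref{eq:propp2p}.

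First I would handle the pilot column. Applying Lemma~\ref{lemma:distributionR} with $\pmb{A}=\Id_N$ gives $\E[-\log r_{\rvVec{Y}_{[v]}}(\rvMat{Y}_{[v]})\cond\rv{V}=v] = N\,\E[\log\|\rvMat{Y}_{[v]}\|^2\cond\rv{V}=v] + O(\log\log(\E[\|\rvMat{Y}_{[v]}\|^2]))$. Since $\rvMat{Y}_{[v]} = \rv{X}_v\rvVec{H} + \rvMat{Z}_{[v]}$, we have $\E[\|\rvMat{Y}_{[v]}\|^2\cond \rvVec{X}] = |\rv{X}_v|^2\E[\|\rvVec{H}\|^2] + N = O\bigl(1+|\rv{X}_v|^2\bigr)$ by the moment assumption on $\rvVec{H}$, and under the peak-power constraint $|\rv{X}_v|^2\le P$ so the double-log term is $O(\log\log P)$. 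For the logarithmic moment itself, $\log\|\rvMat{Y}_{[v]}\|^2 \le \log\bigl(2|\rv{X}_v|^2\|\rvVec{H}\|^2 + 2\|\rvMat{Z}_{[v]}\|^2\bigr) \le \log(1+|\rv{X}_v|^2) + \log(\text{something with finite log-moment})$, using $h(\rvVec{H})>-\infty$, $\E\|\rvVec{H}\|^2<\infty$ (hence $\E|\log\|\rvVec{H}\|^2|<\infty$) and the analogous facts for the Gaussian $\rvMat{Z}_{[v]}$; this contributes $N\E[\log(1+|\rv{X}_\rv{V}|^2)] + O(1)$, and recalling $|\rv{X}_v|=|\rv{X}_\rv{V}|$ by definition of $\rv{V}$.

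Next I would handle the data columns $i\ne v$. Conditioned on $\rvMat{Y}_{[v]}$, Lemma~\ref{lemma:distributionR} with $\pmb{A}=(\Id_N+\rvMat{Y}_{[v]}\rvMat{Y}_{[v]}^\H)^{-1/2}$ gives an expected conditional self-information equal to $-\log|\det\pmb{A}|^2 + N\E[\log\|\pmb{A}\rvMat{Y}_{[i]}\|^2] + O(\log\log(\cdot))$. Here $-\log|\det\pmb{A}|^2 = \log\det(\Id_N+\rvMat{Y}_{[v]}\rvMat{Y}_{[v]}^\H) = \log(1+\|\rvMat{Y}_{[v]}\|^2) \le \log(1+|\rv{X}_\rv{V}|^2) + O(\log\log P)$ (in expectation, using the log-moment bound above again). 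For the remaining term, $\|\pmb{A}\rvMat{Y}_{[i]}\|^2 \le \|\rvMat{Y}_{[i]}\|^2$ since $\pmb{A}\preceq\Id_N$; crudely this would already give the $(N+T-1)\log(1+|\rv{X}_\rv{V}|^2)$-type bound if $|\rv{X}_i|\le|\rv{X}_\rv{V}|$, but the sharper structure we want — the factor $1/(1+|\rv{X}_\rv{V}|^2)$ inside the log in~\eqref{eq:propp2p} — comes from keeping $\pmb{A}$: intuitively $\rvMat{Y}_{[v]}$ "knows" the direction of $\rvVec{H}$, so $\pmb{A}$ shrinks the $\rvVec{H}$-component of $\rvMat{Y}_{[i]} = \rv{X}_i\rvVec{H}+\rvMat{Z}_{[i]}$ by roughly $1/(1+|\rv{X}_\rv{V}|^2)$. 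Making this precise is the main obstacle: I would write $\|\pmb{A}\rvMat{Y}_{[i]}\|^2 \le 2|\rv{X}_i|^2\|\pmb{A}\rvVec{H}\|^2 + 2\|\pmb{A}\rvMat{Z}_{[i]}\|^2 \le 2|\rv{X}_i|^2\|\pmb{A}\rvVec{H}\|^2 + 2\|\rvMat{Z}_{[i]}\|^2$, decompose $\rvVec{H} = \rvVec{H}^{\parallel}+\rvVec{H}^{\perp}$ into components along and orthogonal to $\rvMat{Y}_{[v]}$, bound $\|\pmb{A}\rvVec{H}^{\parallel}\|^2 \le \|\rvVec{H}\|^2/(1+\|\rvMat{Y}_{[v]}\|^2)$ and $\|\pmb{A}\rvVec{H}^{\perp}\|^2\le\|\rvVec{H}^\perp\|^2\le\|\rvVec{H}\|^2$, and then argue that the "bad event" where $\rvVec{H}$ is nearly orthogonal to $\rvMat{Y}_{[v]}$ has small enough probability (controlled by genericity of $\rvVec{H}$, i.e. $h(\rvVec{H})>-\infty$) that it only costs an additive $O(\log\log P)$ in the expected log. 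After taking $\log$, using concavity and splitting $\E[\log(a+b)]\le \E[\log(1+a)]+\E[\log(1+b)]$, each data column contributes $N\E[\log(1+|\rv{X}_i|^2/(1+|\rv{X}_\rv{V}|^2))] + O(\log\log P)$. Summing the pilot term ($(N)\cdot$ plus the $(T-1)$ copies of the $\log\det$ term giving the total coefficient $N+T-1$ on $\E[\log(1+|\rv{X}_\rv{V}|^2)]$) with the $T-1$ data terms, and absorbing all $O(\log\log P)$ errors, yields~\eqref{eq:propp2p}.
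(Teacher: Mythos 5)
Your opening steps coincide with the paper's: factor $q(\rvMat{Y}\cond\rv{V}=v)$ column by column, apply Lemma~\ref{lemma:distributionR} to each factor, and collect $N\E[\log(1+\|\rvMat{Y}_{[v]}\|^2)]$ from the pilot column plus $T-1$ copies of $-\log|\det\pmb{A}|^2=\log(1+\|\rvMat{Y}_{[v]}\|^2)$, which is exactly how the coefficient $N+T-1$ arises. The gap is in the one step you yourself flag as the main obstacle: bounding $N\E[\log\|\pmb{A}\rvMat{Y}_{[i]}\|^2]$ with $\pmb{A}=(\Id_N+\rvMat{Y}_{[v]}\rvMat{Y}_{[v]}^\H)^{-1/2}$. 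As written, your sketch does not close. First, the bound $\|\pmb{A}\rvVec{H}^{\perp}\|^2\le\|\rvVec{H}\|^2$ gives away precisely the gain you need: with it, the term $|\rv{X}_i|^2\|\rvVec{H}^\perp\|^2$ can be as large as $|\rv{X}_i|^2\|\rvVec{H}\|^2$, i.e.\ no shrinking at all. What saves this term is that the component of $\rvVec{H}$ orthogonal to $\rvMat{Y}_{[v]}$ is deterministically small, $\|\rvVec{H}^\perp\|\le\|\rvMat{Z}_{[v]}\|/|\rv{X}_v|$ (since $\rvVec{H}=(\rvMat{Y}_{[v]}-\rvMat{Z}_{[v]})/\rv{X}_v$ and the $\rvMat{Y}_{[v]}$ part has no orthogonal component), combined with $|\rv{X}_i|\le|\rv{X}_{\rv{V}}|$ from the definition of $\rv{V}$, so that $|\rv{X}_i|^2\|\rvVec{H}^\perp\|^2\le\|\rvMat{Z}_{[v]}\|^2$; neither observation appears in your argument. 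Second, the relevant bad event is not near-orthogonality of $\rvVec{H}$ and $\rvMat{Y}_{[v]}$ in itself but the event that the pilot-slot noise nearly cancels $\rv{X}_v\rvVec{H}$, so that $1+\|\rvMat{Y}_{[v]}\|^2$ is not comparable to $(1+|\rv{X}_v|^2)\|\rvVec{H}\|^2$; and your claim that it ``only costs an additive $O(\log\log P)$'' is asserted, not proved. Its probability is governed by the Gaussian tail of $\rvMat{Z}_{[v]}$ together with the lower tail of $\|\rvVec{H}\|$, against a potential loss of order $\log P$; the assumption $h(\rvVec{H})>-\infty$ controls this only through a quantitative consequence such as $\E[\log\|\rvVec{H}\|^2]>-\infty$ (itself needing a short divergence argument with a reference density singular at the origin, using $\E[\|\rvVec{H}\|^2]<\infty$), which you do not supply. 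Since this is exactly where the factor $1/(1+|\rv{X}_{\rv{V}}|^2)$ in \eqref{eq:propp2p} must come from, the proposition is not established by the proposal.

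For comparison, the paper avoids any pathwise or bad-event reasoning. It writes $\|\pmb{A}\rvMat{Y}_{[i]}\|^2=\|\rvMat{Y}_{[i]}\|^2-\frac{|\rvMat{Y}_{[i]}^\H\rvMat{Y}_{[v]}|^2}{1+\|\rvMat{Y}_{[v]}\|^2}$, splits the logarithm as $\log\big(\|\rvMat{Y}_{[i]}\|^2+\|\rvMat{Y}_{[i]}\|^2\|\rvMat{Y}_{[v]}\|^2-|\rvMat{Y}_{[i]}^\H\rvMat{Y}_{[v]}|^2\big)-\log(1+\|\rvMat{Y}_{[v]}\|^2)$, and computes exactly, given $\rvVec{X}$, $\E_{\rvVec{H},\rvMat{Z}}\big[\|\rvMat{Y}_{[i]}\|^2\|\rvMat{Y}_{[v]}\|^2-|\rvMat{Y}_{[i]}^\H\rvMat{Y}_{[v]}|^2\big]=(N^2-N)(1+|\rv{X}_v|^2+|\rv{X}_i|^2)$: the common direction $\rvVec{H}$ cancels in this Gram determinant, which is where the interference suppression is captured, in expectation and using only $\E[\|\rvVec{H}\|^2]<\infty$. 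Jensen's inequality is then applied upward to the first term and Lemma~\ref{lemma:2} (with $\alpha$ close to $1$) downward to the subtracted term, yielding the ratio $1+\frac{|\rv{X}_i|^2}{1+|\rv{X}_v|^2}$ directly. If you want to keep your geometric route, it can be completed, but only by adding the deterministic bound on $\rvVec{H}^\perp$, the use of $|\rv{X}_i|\le|\rv{X}_{\rv{V}}|$, and the tail estimate derived from $\E[\log\|\rvVec{H}\|^2]>-\infty$ described above.
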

\begin{proof}
	See Appendix~\ref{proof:propMeanLogQ_p2p}.
\end{proof}
Plugging the bounds into~\eqref{eq:boundRp2p}, we obtain
\begin{align}
I(\rvVec{X};\rvMat{Y}) &\le (T\!-\!1)\E[\log(1\!+\!|\rv{X}_\rv{V}|^2)] + N\E[\log\frac{1\!+\!|\rv{X}_\rv{V}|^2}{1\!+\!\|\rvVec{X}\|^2}] \notag\\&\hspace{.5cm}+ N\E[\sum_{i=1,i\ne \rv{V}}^{T}\log\left(1+\frac{|\rv{X}_i|^2}{1+|\rv{X}_\rv{V}|^2}\right)] \notag\\&\hspace{.5cm}+ O(\log\log P) \notag\\
&\le (T-1)\log(1+\E[|\rv{X}_\rv{V}|^2]) + O(\log\log P) \\
&\le (T-1)\log^+(P) + O(\log\log P),
\end{align}
where we used the fact that $|\rv{X}_i|^2 \le |\rv{X}_{\rv{V}}|^2 \le \|\rvVec{X}\|^2$, $\forall i \ne \rv{V}$. Thus, the DoF is upper bounded by $\frac{T-1}{T}$, which is tight.

\section{Two-User SIMO MAC} \label{sec:converse}
Let us get back to the MAC in this section and show that, when $T\ge 3, N>1$, any achievable DoF pair $(d_1,d_2)$ must satisfy~\eqref{eq:DoFbound1} and~\eqref{eq:DoFbound2}. 

\subsection{The $T\ge N+1 > 2$ case}
Let us consider the more straightforward case with $T\ge N+1 > 2$. We first bound $R_1$ and $R_2$ using similar techniques as for the single-user case, and then give the tight outer bound for the DoF region in the following steps. 

\subsubsection*{Step 1: Output Rotation and Genie-Aided Bound}
Given $\rvVec{X}_2$, the channel with respect to (w.r.t.) input
$\rvVec{X}_1$ has equivalent noise $\rvVec{H}_2 \rvVec{X}_2^\T +
\rvMat{Z}$. Consider the following eigen-value decomposition
\begin{align}
\rvVec{X}_2^*\rvVec{X}_2^\T = \rvMat{U} \ \diag(0,\dots,0,\|\rvVec{X}_2\|^2) \ \rvMat{U}^\H, 
\end{align}
for some $T\times T$ unitary matrix $\rvMat{U}$. We consider the rotated
output $\Yt = \rvMat{Y} \rvMat{U} = \rvVec{H}_1 \Xvt_1^\T + \Zt$, where
$\Xvt_1^\T = \rvVec{X}_1^\T \rvMat{U} = [\Xt_{11} \ \Xt_{12} \ \dots \
\Xt_{1T}]$ and $\Zt = (\rvVec{H}_2 \rvVec{X}_2^\T +
\rvMat{Z})\rvMat{U}$. Note that given $\rvVec{X}_2$, the first
$T-1$ columns of the noise $\Zt$ are i.i.d.~Gaussian whereas the last
column is stronger as the sum of $\rvVec{H}_2 \|\rvVec{X}_2\|$ and a Gaussian noise
vector. Thus, we have
\begin{align}
T R_1 \le I(\rvVec{X}_1; \rvMat{Y} | \rvVec{X}_2) &= I(\Xvt_1;\Yt | \rvVec{X}_2). \label{eq:boundR1_1}
\end{align}
Let us define the random variable $\rv{V}$ as the index of the strongest
among the first $T-1$ elements of $\Xvt_1$, namely,
\begin{align}
\rv{V} = \arg\max_{i=1,2,\dots,T-1} |\Xt_{1i}|^2.
\end{align}

Similarly as in \eqref{eq:boundRp2p} with the genie-aided bound, 
\begin{align}
\hspace{-.2cm}{I(\Xvt_1;\Yt | \rvVec{X}_2) 
\le h(\Yt | \rvVec{X}_2,\! \rv{V})\! -\! h(\Yt | \Xvt_1,\rvVec{X}_2)\!
+\! \log(T-1).} \label{eq:boundR1_2}
\end{align}

\subsubsection*{Step 2: Bounding $h(\Yt | \Xvt_1,\rvVec{X}_2)$ and $h(\Yt | \rvVec{X}_2, \rv{V})$}
Given $\Xvt_1$ and $\rvVec{X}_2$, we can apply Lemma~\ref{lemma:1} with
$\rvMat{W}=[\rvVec{H}_1\ \rvVec{H}_2\ \rvMat{Z}]$ and $\Am=\left[ 
\rvVec{X}_1 \ \rvVec{X}_2 \ \Id_T \right]^\T\rvMat{U}$ to
obtain
\begin{multline}
h(\Yt|\Xvt_1,\rvVec{X}_2) = N \E[\log\det(
\Am^\H \Am)] + O(1) \\
= N\E\bigg[\log\bigg((1+\|\rvVec{X}_2\|^2)\Big(1+\sum_{i=1}^{T-1}|\Xt_{1i}|^2\Big) + |\Xt_{1T}|^2\bigg)\bigg] \\+ O(1),
\label{eq:bound_ent}
\end{multline}
where the last equality is obtained by applying $\tilde{\rvVec{X}}_1^\T
= \rvVec{X}_1^\T \rvMat{U}$. 

For $h(\Yt | \rvVec{X}_2, \rv{V})$, we use the duality upper bound
as before
\begin{align}
h(\Yt | \rvVec{X}_2, \rv{V}) = \E[-\log p(\Yt | \rvVec{X}_2, \rv{V})]
&\le \E[-\log q(\Yt | \rvVec{X}_2, \rv{V})], \notag
\end{align}
where the only difference from the single-user case is the presence of
$\rvVec{X}_2$. We choose the auxiliary pdf $q(.)$ as follows. Given $\rv{V} = v$, $v\le T-1$, we let $
\Yt_{[v]} \sim \Rc(N,\Id_N)$, 
and given
$\Yt_{[v]}$, the other $\Yt_{[i]}$'s are independent and follow
\begin{align}
\Yt_{[i]} &\sim \Rc\left(N,\left(\Id_N+
\Yt_{[v]}\Yt_{[v]}^\H\right)^{-\frac{1}{2}}\right), \quad
i \not\in\{v, T\}, \label{eq:auxDist_i}\\
\Yt_{[T]} &\sim \Rc\left(N,\left((1+\|\rvVec{X}_2\|^2)\Id_N+ \Yt_{[v]}\Yt_{[v]}^\H \right)^{-\frac{1}{2}}\right).\label{eq:auxDist_T}
\end{align}

\begin{proposition} \label{prop:meanLogQ_MAC}
	With the above choice of auxiliary output distribution, we obtain 
	the upper bound~\eqref{eq:bound_meanlog} for $\E[-\log q(\Yt | \rvVec{X}_2, \rv{V})]$, and hence for $h(\Yt | \rvVec{X}_2, \rv{V})$.
	\begin{figure*}[!t]
		\begin{align}
		\E&[-\log q(\Yt | \rvVec{X}_2, \rv{V})] \le (N+T-2)\E[\log(1+|\Xt_{1\rv{V}}|^2)] + N\E[\sum_{i=1,i\ne V}^{T-1} \log\left(1+\frac{|\Xt_{1i}|^2}{1+|\Xt_{1\rv{V}}|^2}\right)] \notag\\ &\hspace{.2cm}+ N\E[\log(1+\|\rvVec{X}_2\|^2)] + \E[\log\left(1+\frac{|\Xt_{1\rv{V}}|^2}{1+\|\rvVec{X}_2\|^2}\right)] + N\E[\log\left(1+\frac{|\Xt_{1T}|^2}{1+\|\rvVec{X}_2\|^2+|\Xt_{1\rv{V}}|^2}\right)] + O(\log\log P). \label{eq:bound_meanlog}
		\end{align}
		\vspace*{-.8cm}
	\end{figure*}
\end{proposition}
\begin{proof}
	See Appendix~\ref{proof:propMeanLogQ_MAC}.
\end{proof}

\subsubsection*{Step 3: Upper Bounds on $R_1$ and $R_2$}
From~\eqref{eq:boundR1_1}, \eqref{eq:boundR1_2}, \eqref{eq:bound_ent} and \eqref{eq:bound_meanlog}, we have the bound for $R_1$	
\begin{align}
TR_1  
&\le \E[f(\Xvt_1,\rvVec{X}_2)] + O(\log\log P),
\end{align}
where  $f(\Xvt_1,\rvVec{X}_2)$ is defined in~\eqref{eq:f}.
\begin{figure*}[!t]
	\begin{multline}
	f(\Xvt_1,\rvVec{X}_2) \defeq (N+T-2)\log\left(1+\max_{i=1,\dots,T-1}|\Xt_{1i}|^2\right)
	+ \log\left(1+\frac{\displaystyle\max_{i=1,\dots,T-1}|\Xt_{1i}|^2}{1+\|\rvVec{X}_2\|^2}\right) \\
	+ N\log\left(1+\frac{|\Xt_{1T}|^2}{1+\|\rvVec{X}_2\|^2+\displaystyle\max_{i=1,\dots,T-1}|\Xt_{1i}|^2}\right)
	- N\log\left(1+\sum_{i=1}^{T-1}|\Xt_{1i}|^2 + \frac{|\Xt_{1T}|^2}{1+\|\rvVec{X}_2\|^2}\right).
	\label{eq:f}
	\end{multline}
	\setlength{\arraycolsep}{1pt}
	\hrulefill \setlength{\arraycolsep}{0.0em}
	\vspace*{-.1cm}
\end{figure*}
Following the exact same steps by swapping the users' role,
\begin{align}
T R_2  
&\le \E[f(\Xvt_2,\rvVec{X}_1)] + O(\log\log P),
\end{align}
where $\Xvt_2 \defeq \rvVec{X}_2\rvMat{U}_1$ with $\rvMat{U}_1$ from the decomposition 
\begin{align}
\rvVec{X}_1^*\rvVec{X}_1^\T = \rvMat{U}_1 \ \diag(0,\dots,0,\|\rvVec{X}_1\|^2) \ \rvMat{U}_1^\H.
\end{align}%

It follows that, for any $\lambda_1,\lambda_2\ge0$, we have the following upper bound on
the weighted sum rate
\begin{align}
  &\hspace{-.3cm}{\lambda_1 R_1\! + \! \lambda_2 R_2}\notag\\
  &\le \frac{1}{T} \E[ \lambda_1
  f(\Xvt_1,\rvVec{X}_2)\! +\! \lambda_2 f(\Xvt_2,\rvVec{X}_1) ]\! +\! O(\log\log P) \\
  &\le \frac{1}{T} \sup_{\xv_1,\xv_2}[ \lambda_1 f(\tilde{\xv}_1,\xv_2)
  \! + \! \lambda_2 f(\tilde{\xv}_2,\xv_1) ]\! +\! O(\log\log P),
  \label{eq:weighted-sum-rate}
\end{align}%
where the supremum is over all $\xv_1,\xv_2$ subject to the peak power
constraints $\|\xv_1\|^2\le P$ and $\|\xv_2\|^2\le P$. 

\subsubsection*{Step 4: DoF upper bounds}
Since we are only interested in the pre-log at high SNR, it is without
loss of optimality to let 
$\|\xv_1\|^2 = P^{\eta_1}, \|\xv_2\|^2 = P^{\eta_2}$ for some
$\eta_1, \eta_2 \le 1$. In addition, we assume that  
\begin{align}
  \max_{i=1,\dots,T-1}|\tilde{x}_{1i}|^2 &= P^{\bar{\eta}_1}, \quad |\tilde{x}_{1T}|^2 = P^{\eta_{1T}}, \\
\max_{i=1,\dots,T-1}|\tilde{x}_{2i}|^2 &= P^{\bar{\eta}_2}, \quad
|\tilde{x}_{2T}|^2 = P^{\eta_{2T}}.
\end{align}
Hence, at high SNR, 
$\eta_1 = \max\{\bar{\eta}_1,\eta_{1T}\}$, $\eta_2 =
\max\{\bar{\eta}_2,\eta_{2T}\}$. From \eqref{eq:f} and
\eqref{eq:weighted-sum-rate}, we have the weighted sum DoF bound
\begin{multline}
{\lambda_1 d_1 + \lambda_2 d_2} \\ 
\le \lambda_1\! \frac{N\!+\!T\!-\!2}{T}\bar{\eta}_1 +\! \lambda_1\!
\frac{1}{T}(\bar{\eta}_1\!-\!\eta_2)^+ +\! \lambda_1\!
\frac{N}{T}(\eta_{1T}\!-\!\max\{\bar{\eta}_1,\eta_2\})^+ \\-\lambda_1
\frac{N}{T}\max\{\bar{\eta}_1,\eta_{1T}-\eta_2\}\\ 
+ \lambda_2 \frac{N\!+\!T\!-\!2}{T}\bar{\eta}_2 + \lambda_2 \frac{1}{T}(\bar{\eta}_2\!-\!\eta_1)^+ + \lambda_2 \frac{N}{T}(\eta_{2T}-\!\max\{\bar{\eta}_2,\eta_1\})^+ \\-\lambda_2
\frac{N}{T}\max\{\bar{\eta}_2,\eta_{2T}-\eta_1\}, \label{eq:weighted-sum-dof}
\end{multline}
subject to the constraints $\bar{\eta}_1, \eta_{1T} \le 1$ and
$\bar{\eta}_2, \eta_{2T} \le 1$. Taking $(\lambda_1,\lambda_2)$ as $\left(1,\frac{1}{T-2}\right)$ or $\left(\frac{1}{T-2},1\right)$, we can verify that, when  
$3\le N+1\le T$, \eqref{eq:DoFbound1} and~\eqref{eq:DoFbound2} hold for all $(d_1,d_2)$ satisfying~\eqref{eq:weighted-sum-dof}. Thus the optimal DoF region is characterized.

\subsection{The $3\le T \le N$ case}
When $T \le N$, the above choice of auxiliary output distribution is not sufficient for a tight DoF outer bound. To see this, let us take $(\lambda_1,\lambda_2) = \left(1,\frac{1}{T-2}\right)$, then if $\bar{\eta}_1+\eta_2 \ge \eta_{1T} = 1$ and $\eta_2 = \bar{\eta}_1$, \eqref{eq:weighted-sum-dof} becomes
\begin{align}
d_1+\frac{d_2}{T-2} \le \frac{T-1}{T} \bar{\eta}_1 + \frac{N}{T} (\eta_{1T} -\bar{\eta}_1),
\end{align}
which is loose since the right-hand side is larger than $1-\frac{1}{T}$ if $N \ge T$. 
Generally, the bound~\eqref{eq:weighted-sum-dof} can be loose when ${\eta}_{1T} > \max\{\bar{\eta}_1,\eta_2\}$ or ${\eta}_{2T} > \max\{\bar{\eta}_2,\eta_1\}$. To account for such scenarios, we ought to refine our choice of auxiliary output distribution for the duality upper bound. First, given $\rvVec{X}_2$, we define a pair of random variables $(\rv{V},\rv{U})$ as 
\begin{align}
\rv{V} = \arg\max_{i=1,2,\dots,T} \frac{|\Xt_{1i}|^2}{\sigma_i^2},
\end{align}
where $\sigma_i^2 = 1, \forall i <T$ and $\sigma_T^2 = 1+\|\rvVec{X}_2\|^2$, and
\begin{align}
\rv{U} = \begin{cases}
1, &\text{if $|\Xt_{1T}|^2 \ge \max\left\{\displaystyle\max_{i=1,\dots,T-1}|\Xt_{1i}|^2,1\!+\!\|\rvVec{X}_2\|^2\right\}$}, \\
0, &\text{otherwise}.
\end{cases}
\end{align}
Thus, $\Xt_{1\rv{V}}$ is the input entry with the largest instantaneous SNR, and $\rv{U}$ determines a specific configuration of input entry powers in which the choice of auxiliary output distribution in the previous case possibly fails.
Then similarly as for the case $T\ge N+1$, with output rotation, genie-aided bound, and duality upper bound, we have that
\begin{multline}
TR_1 \le I(\Xvt_1;\Yt | \rvVec{X}_2) \\
	\le \E[\!-\!\log q(\Yt | \rvVec{X}_2, \rv{V}, \rv{U})] \!-\!  h(\Yt | \Xvt_1,\rvVec{X}_2)\!
	+\! \log(2T), \label{eq:boundR1_3}
\end{multline}
where $h(\Yt | \Xvt_1,\rvVec{X}_2)$ was calculated in~\eqref{eq:bound_ent}. For $\E[-\log q(\Yt | \rvVec{X}_2, \rv{V}, \rv{U})]$, we choose the auxiliary pdf $q(\Yt | \rvVec{X}_2, \rv{V}, \rv{U})$ as follows. Given $\rv{V} = v$ and $\rv{U} = u$, if $v = T$ or $\{v <T, u = 0\}$, we let $\Yt_{[v]} \sim \Rc(N,\Id_N)$ and conditioned on
$\Yt_{[v]}$, the other $\Yt_{[i]}$'s are independent and follow 
\begin{align}
\Yt_{[i]} \sim \Rc\left(N,\left(\sigma_i^2\Id_N+
\frac{\Yt_{[v]}\Yt_{[v]}^\H}{\sigma_v^2}\right)^{-\frac{1}{2}}\right), \quad i \ne v. \label{eq:tmp636}
\end{align}
This choice is inspired by a training-based scheme in which the input symbol with strongest SNR is used as pilot. After some manipulations similar as for Propositions~\ref{prop:meanLogQ_p2p} and~\ref{prop:meanLogQ_MAC}, we get the bounds~\eqref{eq:tmp629} and~\eqref{eq:tmp631}. 
\begin{figure*}[!t]
	\begin{align}
	\E[\!-\!\log q(\Yt | \rvVec{X}_2, \rv{V} \!=\! v < T, \rv{U} = 0)] &\!\le\! (N\!+\!T\!-\!2)\E[\log(1\!+\!|\Xt_{1v}|^2)] \!+\! 
	N\E[\log(1\!+\!\|\rvVec{X}_2\|^2)] \!+\! \E[\log\left(1\!+\!\frac{|\Xt_{1v}|^2}{1\!+\!\|\rvVec{X}_2\|^2}\right)] 
	\notag\\ &\hspace{8cm}+ O(\log\log P). \label{eq:tmp629} \\
	\E[\!-\!\log q(\Yt | \rvVec{X}_2, \rv{V} \!=\! T)] &\!\le\! N \E[\log(1+\|\rvVec{X}_2\|^2 + |\Xt_{1T}|^2)] + (T-1)\E[\log\left(1+\frac{|\Xt_{1T}|^2}{1+\|\rvVec{X}_2\|^2}\right)] + O(\log\log P). \label{eq:tmp631} 
	\end{align}
	\vspace*{-.9cm}
\end{figure*}
If $\{v <T, u = 1\}$, we let $\Yt_{[v]} \sim \Rc(N,\Id_N)$ and given $\Yt_{[v]}$, the other $\Yt_{[i]}$'s are independent with
\begin{align}
\Yt_{[i]} &\sim \Rc\left(N,\left(\Id_N+
\Yt_{[v]}\Yt_{[v]}^\H\right)^{-\frac{1}{2}}\right), \quad i \not\in\{v, T\}, \\
\Yt_{[T]} &\sim \Rc\left(N,\left((1+\|\rvVec{X}_2\|^2)\Id_N+ \frac{P}{\|\Yt_{[v]}\|^2}\Yt_{[v]}\Yt_{[v]}^\H \right)^{-\frac{1}{2}}\right), 
\end{align}
where the only difference from~\eqref{eq:tmp636} is the presence of the factor $\frac{P}{\|\Yt_{[v]}\|^2}$. This factor is added to account for the fact that when $u = 1$, $|\Xt_{1v}|^2 < |\Xt_{1T}|^2$, which can make the power of $\Yt_{[v]}$ inferior to that of $\Yt_{[T]}$. In this case, we have the bound~\eqref{eq:tmp630}. 
\begin{figure*}[!t]
	\begin{align}
	\E[\!-\!\log q(\Yt | \rvVec{X}_2, \rv{V} \!=\! v < T, \rv{U} = 1)] &\!\le\! (N\!+\!T\!-\!2) \E[\log(1\!+\!|\Xt_{1v}|^2)] + N \E[\log\frac{1\!+\!\|\rvVec{X}_2\|^2 \!+\! |\Xt_{1T}|^2}{1+\|\rvVec{X}_2\|^2+P}] \!+\! N \E[\log(1\!+\!\|\rvVec{X}_2\|^2)] \notag\\ &\hspace{3cm}+ \E[\log\left(1+\frac{P}{1+\|\rvVec{X}_2\|^2}\right)] + O(\log\log P). \label{eq:tmp630}
	\end{align}
	\setlength{\arraycolsep}{1pt}
	\hrulefill \setlength{\arraycolsep}{0.0em}
\end{figure*}

These bounds and~\eqref{eq:bound_ent} give us the bound for $R_1$
\begin{align}
TR_1 &\le \E[g(\Xvt_1,\rvVec{X}_2)] + O(\log\log P),
\end{align}
where  $g(\Xvt_1,\rvVec{X}_2)$ is defined in~\eqref{eq:g},
\begin{figure*}[!t]
	\begin{align}
	g(\Xvt_1,\rvVec{X}_2) \defeq \begin{cases}
	&(T-2)\log\left(1+\displaystyle\max_{i=1,\dots,T-1}|\Xt_{1i}|^2\right)
	+ \log\left(1+\frac{\displaystyle\max_{i=1,\dots,T-1}|\Xt_{1i}|^2}{1+\|\rvVec{X}_2\|^2}\right), \\ &\hspace{2cm}\text{if  $\frac{|\Xt_{1T}|^2}{1+\|\rvVec{X}_2\|^2} < \displaystyle\max_{i=1,\dots,T-1}|\Xt_{1i}|^2$ and $|\Xt_{1T}|^2 \le \max\left\{\displaystyle\max_{i=1,\dots,T-1}|\Xt_{1i}|^2,1\!+\!\|\rvVec{X}_2\|^2\right\}$}, \\
	&(T\!-\!2) \log(1\!+\!\displaystyle\max_{i=1,\dots,T-1}|\Xt_{1i}|^2) + N \log\left(\frac{1\!+\!\|\rvVec{X}_2\|^2 \!+\! |\Xt_{1T}|^2}{1+\|\rvVec{X}_2\|^2+P}\right) + \log\left(1\!+\!\frac{P}{1\!+\!\|\rvVec{X}_2\|^2}\right), \\
	&\hspace{2cm}\text{if  $\frac{|\Xt_{1T}|^2}{1+\|\rvVec{X}_2\|^2} < \displaystyle\max_{i=1,\dots,T-1}|\Xt_{1i}|^2$ and $|\Xt_{1T}|^2 > \max\left\{\displaystyle\max_{i=1,\dots,T-1}|\Xt_{1i}|^2,1\!+\!\|\rvVec{X}_2\|^2\right\}$}, \\
	&(T-1)\log\left(1 + \frac{|\Xt_{1T}|^2}{1+\|\rvVec{X}_2\|^2}\right), \quad 
	\text{if  $\frac{|\Xt_{1T}|^2}{1+\|\rvVec{X}_2\|^2} > \displaystyle\max_{i=1,\dots,T-1}|\Xt_{1i}|^2$}. 
	\end{cases} \label{eq:g}
	\end{align}
	\setlength{\arraycolsep}{1pt}
	\hrulefill \setlength{\arraycolsep}{0.0em}
	\vspace*{-.2cm}
\end{figure*}
and the similar bound for $R_2$
\begin{align}
T R_2 &\le \E[g(\Xvt_2,\rvVec{X}_1)] + O(\log\log P).
\end{align}
The rest of the proof follows from a similar weighted sum bound for the rates and the DoFs as done in the previous case.
 
\section{Conclusion} \label{sec:conclusion}
In this work, we have proposed a new tight outer bound on the DoF region
of the two-user non-coherent SIMO MAC with block fading. 
The outer bound region coincides
with the inner bound region achieved by a simple training-based scheme.
We expect to extend the results to the general MIMO case.  


\begin{appendix}
\subsection{Proof of mathematical preliminaries} \label{app:proofs}
\subsubsection{Proof of Lemma~\ref{lemma:1}} \label{proof:lemma1}
Consider the eigen-value decomposition 
$
\Am = \Um \Sigmam \Vm,
$
where $\Um \in \CC^{m\times m}$ and $\Vm \in \CC^{t\times t}$ are unitary matrices, and $\Sigmam = \begin{bmatrix}
\Sigmam'\\\mathbf{0} \end{bmatrix}$ with $\Sigmam' \in \CC^{t\times t}$ a diagonal matrix containing the singular values of $\Am$. Let $\rvMat{W}'= \rvMat{W} \Um$, we have
\begin{align}
h(\rvMat{W}\Am) & = h(\rvMat{W} \Um \Sigmam \Vm) \\
&= h(\rvMat{W}'\Sigmam) \\
&= h(\rvMat{W}'_{[1:t]} \Sigmam') \\
&= h(\rvMat{W}'_{[1:t]}) + n\log|\det(\Sigma')|^2 \\ \label{eq:tmp620}
&= h(\rvMat{W}'_{[1:t]}) + n\log\det(\Am^\H \Am),
\end{align}
where the second equality is because rotation does not change differential entropy; \eqref{eq:tmp620} follows from a change of variables. Next, it follows from $h(\rvMat{W}'_{[1:t]}) + h(\rvMat{W}'_{[t+1:m]}) \ge h(\rvMat{W}') = h(\rvMat{W}) > -\infty$ that 
\begin{align}
h(\rvMat{W}'_{[1:t]}) > -\infty - h(\rvMat{W}'_{[t+1:m]}) > -\infty,
\end{align} 
where $h(\rvMat{W}'_{[t+1:m]}) < \infty$ since the average total power of $\rvMat{W}'_{[t+1:m]}$ is bounded by $\E[\|\rvMat{W}'\|^2_{\rm F}] = \E[\|\rvMat{W}\|^2_{\rm F}] < \infty$. We also have that $h(\rvMat{W}'_{[1:t]}) < \infty$. Therefore, $h(\rvMat{W}'_{[1:t]})$ is bounded by some constant that only depends on the statistics of $\rvMat{W}$. This concludes the proof.

\subsubsection{Proof of Lemma~\ref{lemma:2}} \label{proof:lemma2}
Let $p(.)$ be the density of $\rv{X}$. We introduce an auxiliary distribution with density $q(x) = \left(\frac{1}{\alpha}-1\right) (1+x)^{-1/\alpha}$, $x\ge 0$, with $\alpha < 1$. Then it follows that $h(\rv{X}) + \E[\log(q(x))] = -D(p\|q) \le 0$, which yields
\begin{align}
\E[\log(1+\rv{X})] \ge \alpha h(\rv{X}) + \alpha \log\left(\frac{1}{\alpha}-1\right). \label{eq:tmp651}
\end{align}
If $\E[\rv{X}]\le 1$, then~\eqref{eq:lemma2} holds readily with $\const = -\alpha$. 

If $\E[\rv{X}] > 1$, we have
\begin{align}
h(\rv{X}) &= h\left(\E[\rv{X}]\frac{\rv{X}}{\E[\rv{X}]}\right)\\
&= \log(\E[\rv{X}]) +  h\left(\frac{\rv{X}}{\E[\rv{X}]}\right) \\
&\ge \log(1+\E[\rv{X}]) -1 +h\left(\frac{\rv{X}}{\E[\rv{X}]}\right),
\end{align}
then applying~\eqref{eq:tmp651},~\eqref{eq:lemma2} holds with $\const = \alpha \log\left(\frac{1}{\alpha}-1\right) - \alpha + \alpha h\left(\frac{\rv{X}}{\E[\rv{X}]}\right) > -\infty$.

\subsubsection{Proof of Lemma~\ref{lemma:3}} \label{proof:lemma3}
We prove the lemma by construction. Consider a rate pair $(R_1,R_2)$ achievable with some input pdf $p_{\|\rvVec{X}_1\|^2}(.)$ and $p_{\|\rvVec{X}_2\|^2}(.)$ satisfying the average power constraints $P$. Let us define a new input distribution with the truncated pdf as
\begin{align}
p_{\underline{\rvVec{X}}_i}(x) = \begin{cases}
\frac{p_{\|\rvVec{X}_i\|^2}(x)}{\Pr(\|\rvVec{X}_i\|^2 < P^\beta)}, &~\text{if $x<P^\beta$}, \\
0, &~\text{if $x\ge P^\beta$},
\end{cases}
\end{align}
for $i = 1,2$, with $\beta>1$. For convenience, let us denote the inputs following $p_{\underline{\rvVec{X}}_1}(x)$ and $p_{\underline{\rvVec{X}}_2}(x)$ as $\underline{\rvVec{X}}_1$ and $\underline{\rvVec{X}}_2$, respectively. Then $\underline{\rvVec{X}}_1$ and $\underline{\rvVec{X}}_2$ satisfy the peak power constraint $P^\beta$. Similarly, we define $\bar{\rvVec{X}}_1$ and $\bar{\rvVec{X}}_2$ with pdf
\begin{align}
p_{\bar{\rvVec{X}}_i}(x) = \begin{cases}
\frac{p_{\|\rvVec{X}_i\|^2}(x)}{\Pr(\|\rvVec{X}_i\|^2 \ge P^\beta)}, &~\text{if $x\ge P^\beta$}, \\
0, &~\text{if $x < P^\beta$}.
\end{cases}
\end{align}
Clearly, $\rvVec{X}_i$ equals $\underline{\rvVec{X}}_i$ if $\|\rvVec{X}_i\|^2 < P^\beta$ and $\bar{\rvVec{X}}_i$ otherwise. We define the random variable $\rv{V}$ as
\begin{align}
\rv{V} = \begin{cases}
0, &~\text{if $\|\rvVec{X}_1\|^2 = \|\underline{\rvVec{X}}_1\|^2$ and $\|\rvVec{X}_2\|^2 = \|\underline{\rvVec{X}}_2\|^2$}, \\
1, &~\text{otherwise}.
\end{cases}
\end{align}
By Markov's inequality, 
\begin{multline}
\Pr(\|\rvVec{X}_i\|^2 = \|\bar{\rvVec{X}}_i\|^2) = \Pr(\|\rvVec{X}_i\|^2 \!\ge\! P^\beta) \\ \le \frac{\E[\|\rvVec{X}_i\|^2]}{P^\beta} \le T P^{1-\beta}, \quad i=1,2, \label{eq:markov1}
\end{multline}
then 
\begin{multline}
\Pr(\rv{V}\!=\!1) = 1-\Pr(\|\rvVec{X}_1\|^2 \!=\! \|\underline{\rvVec{X}}_1\|^2) \Pr(\|\rvVec{X}_2\|^2 \!=\! \|\underline{\rvVec{X}}_2\|^2) \\
\le 1-\left(1-TP^{1-\beta}\right)^2 \le 2TP^{1-\beta}.\label{eq:markov2}
\end{multline}
Let the genie give $\rv{V}$ to the receiver, we have that
\begin{align}
TR_1 &\le I(\rvVec{X}_1;\rvMat{Y}|\rvVec{X}_2) \\
&\le I(\rvVec{X}_1;\rvMat{Y},\rv{V}|\rvVec{X}_2) \\
&= I(\rvVec{X}_1;\rvMat{Y}|\rvVec{X}_2,\rv{V}) + I(\rvVec{X}_1;\rv{V}|\rvVec{X}_2) \\
&\le \Pr(\rv{V}=0)I(\rvVec{X}_1;\rvMat{Y}|\rvVec{X}_2,\rv{V}=0) \notag \\&\hspace{1cm}+ \Pr(\rv{V}=1)I(\rvVec{X}_1;\rvMat{Y}|\rvVec{X}_2,\rv{V}=1) + 1, \label{eq:tmp642} \\
&\le I(\underline{\rvVec{X}}_1;\rvMat{Y}|\underline{\rvVec{X}}_2) + \Pr(\rv{V}=1)I(\rvVec{X}_1;\rvMat{Y}|\rvVec{X}_2,\rv{V}=1) + 1, \label{eq:tmp647}
\end{align}
where~\eqref{eq:tmp642} is due to $I(\rvVec{X}_1;\rv{V}|\rvVec{X}_2) \le H(\rv{V}) \le 1$ bits. 
Next, since removing noise and giving CSI increase the rate,
\begin{align}
&\hspace{-.5cm}\Pr(\rv{V}=1)I(\rvVec{X}_1;\rvMat{Y}|\rvVec{X}_2,\rv{V}=1) \notag\\&\le \Pr(\rv{V}=1)I(\rvVec{X}_1;\rvVec{H}_1\rvVec{X}_1^\T + \rvMat{Z} | \rvVec{H}_1,\rv{V}=1) \\
&\le N \Pr(\rv{V}=1) \log(1+\E[\|\rvVec{X}_1\|^2 | \rv{V}=1]) \\
&\le  N \Pr(\rv{V}=1) \log\left(1+\frac{P}{\Pr(\|\rvVec{X}_1\|^2 \ge P^\beta)} \right) \label{eq:tmp807}\\
&\le N \Pr(\rv{V}=1) \log\left(1+P\right) \notag \\&\hspace{1cm}- N \Pr(\rv{V}=1) \log\Pr(\|\rvVec{X}_1\|^2 \ge P^\beta)\\
&=O(P^{1-\beta}\log P^\beta).
\end{align}
where~\eqref{eq:tmp807} is because 
\begin{multline}
\E[\|\rvVec{X}_1\|^2 | \rv{V}=1] \le \E[\|\bar{\rvVec{X}}_1\|^2] 
= \frac{\int_{P^\beta}^{\infty} x p_{\|\rvVec{X}_1\|^2}(x) dx}{\Pr(\|\rvVec{X}_1\|^2 \ge P^\beta)} 
\\\le \frac{\int_{0}^{\infty} x p_{\|\rvVec{X}_1\|^2}(x) dx}{\Pr(\|\rvVec{X}_1\|^2 \ge P^\beta)} 
\le \frac{P}{\Pr(\|\rvVec{X}_1\|^2 \ge P^\beta)},
\end{multline}
and the last equality follows from~\eqref{eq:markov1} and \eqref{eq:markov2}. 
Plugging this into~\eqref{eq:tmp647} yields 
\begin{align}
TR_1 \le I(\underline{\rvVec{X}}_1;\rvMat{Y}|\underline{\rvVec{X}}_2)+ O(P^{1-\beta}\log P^\beta).
\end{align}
Following the same steps by swapping the users' role, we get the bound for $R_2$
\begin{align}
TR_2 \le I(\underline{\rvVec{X}}_2;\rvMat{Y}|\underline{\rvVec{X}}_1)+ O(P^{1-\beta}\log P^\beta).
\end{align} 
Using similar techniques, we can also show that 
\begin{align}
T(R_1+R_2) \le I(\underline{\rvVec{X}}_1,\underline{\rvVec{X}}_2;\rvMat{Y})+ O(P^{1-\beta}\log P^\beta).
\end{align}
Therefore, there exists $(R'_1,R'_2)$ satisfying 
\begin{align}
R'_1+R'_2 &\le \frac{1}{T}I(\underline{\rvVec{X}}_1, \underline{\rvVec{X}}_2; \rvMat{Y}), \\
R'_1 &\le \frac{1}{T}I(\underline{\rvVec{X}}_1; \rvMat{Y} | \underline{\rvVec{X}}_2),\\
R'_2 &\le \frac{1}{T}I(\underline{\rvVec{X}}_2; \rvMat{Y} | \underline{\rvVec{X}}_1),
\end{align}
i.e., achievable with the constructed inputs $\underline{\rvVec{X}}_1$ and $\underline{\rvVec{X}}_2$ satisfying the peak power constraint $P^\beta$, such that~\eqref{eq:boundRpeak} holds. This concludes the proof.

\subsubsection{Proof of Lemma~\ref{lemma:distributionR}} \label{proof:lemmaDistR}
In this proof, all expectations are implicitly w.r.t. $\Pc$. A direct calculation from~\eqref{eq:distR} yields
\begin{multline}
\E[-\log(r_\rvVec{Y}(\rvVec{Y}))] = -\log|\det\Am|^2 + (N-\alpha) \E[\log\|\Am\rvVec{Y}\|^2] \\+ \frac{\E[\|\Am\rvVec{Y}\|^2]}{\beta} + \log\Gamma(\alpha) + \log \beta^\alpha + \log \frac{\pi^N}{\Gamma(N)}.
\end{multline}
When $\beta = \E[\|\pmb{A} \rvVec{Y}\|^2]$ and $\alpha = \frac{1}{\log(\beta)} = \frac{1}{\log(\E[\|\pmb{A} \rvVec{Y}\|^2])}$, this becomes
\begin{align}
&\E[-\log(r_\rvVec{Y}(\rvVec{Y}))] \notag\\&= -\log|\det\Am|^2 + N\E[\log\|\Am\rvVec{Y}\|^2] - \frac{\E[\log\|\Am\rvVec{Y}\|^2]}{\log(\E[\|\pmb{A} \rvVec{Y}\|^2])} \notag \\ &\hspace{.5cm}+ \log\Gamma\left(\frac{1}{\log(\E[\|\pmb{A} \rvVec{Y}\|^2])}\right) + \log \frac{e\pi^N}{\Gamma(N)}, \\
&=  -\log|\det\Am|^2 + N\E[\log\|\Am\rvVec{Y}\|^2] \notag \\ &\hspace{3.5cm} + O(\log\log(\E[\|\Am\rvVec{Y}\|^2])),
\end{align}
where the last equality is because $0<\frac{\E[\log\|\Am\rvVec{Y}\|^2]}{\log(\E[\|\pmb{A} \rvVec{Y}\|^2])}<1$ and 
\begin{align}
\log\Gamma\left(\frac{1}{\log(\E[\|\pmb{A} \rvVec{Y}\|^2])}\right) - \log\log(\E[\|\Am\rvVec{Y}\|^2]) \rightarrow 0,
\end{align}
as $\E[\|\Am\rvVec{Y}\|^2] \to \infty$ due to
\begin{align}
\lim\limits_{x\to\infty} \log\Gamma\left(\frac{1}{x}\right)-\log x &= \lim\limits_{x\to\infty} \log\left(\frac{1}{x}\Gamma\left(\frac{1}{x}\right)\right) \notag\\
&= \lim\limits_{x\to\infty} \log\left(\Gamma\left(1+\frac{1}{x}\right)\right) \notag\\
&= \log(\Gamma(1)) \notag\\
&= 0.
\end{align}

\subsection{Proof of Proposition~\ref{prop:meanLogQ_p2p}} \label{proof:propMeanLogQ_p2p}
Using Lemma~\ref{lemma:distributionR}, it follows that
\begin{align}
&\E[-\log q(\rvMat{Y} | \rv{V}=v)] \notag\\&= N\E[\log\|\rvMat{Y}_{[v]}\|^2] + \sum_{i=1,i\ne v}^{T} \E\Bigg[\log\det\left(\Id_N+ \rvMat{Y}_{[v]}\rvMat{Y}_{[v]}^\H\right)\Bigg. \notag\\&\hspace{.3cm}\!+\! \Bigg. N\log\left\|\left(\Id_N\!+\! \rvMat{Y}_{[v]}\rvMat{Y}_{[v]}^\H\right)^{\!-\!\frac{1}{2}} \rvMat{Y}_{[i]}\right\|^2\Bigg] \!+\! O(\log\log P)\\
&= N\E[\log\|\rvMat{Y}_{[v]}\|^2] + \sum_{i=1,i\ne v}^{T} \E\Bigg[\log\left(1 + \|\rvMat{Y}_{[v]}\|^2\right) \Bigg. \notag\\&\hspace{.3cm}\!+\! \left. N\log\left(\|\rvMat{Y}_{[i]}\|^2\!-\! \frac{|\rvMat{Y}_{[i]}^\H\rvMat{Y}_{[v]}|^2}{1\!+\!\|\rvMat{Y}_{[v]}\|^2} \right)\right] + O(\log\log P) \\
&= (N+T-1)\E[\log(1+\|\rvMat{Y}_{[v]}\|^2)] \notag\\&\hspace{.3cm}+ N\sum_{i=1,i\ne v}^{T} \E\Big[\log(\|\rvMat{Y}_{[i]}\|^2\!+\!\|\rvMat{Y}_{[i]}\|^2\|\rvMat{Y}_{[v]}\|^2\!-\!|\rvMat{Y}_{[i]}^\H\rvMat{Y}_{[v]}|^2) \Big. \notag\\&\hspace{2cm}-\Big.\log(1+\|\rvMat{Y}_{[v]}\|^2)\Big] + O(\log\log P),
\end{align}
where in the second equality, we used the identities $\det(\Id+\uv\vv^\H) = 1+\vv^\H\uv$, $\det(c\pmb{A}) = c^n \det(\pmb{A})$ for $\pmb{A}\in \CC^{n\times n}$, and $\|(\pmb{A}+\uv\vv^\H)^{-1/2} \xv \|^2 = \xv^\H (\pmb{A}+\uv\vv^\H)^{-1} \xv = \xv^\H \left( \pmb{A}^{-1}-\frac{\pmb{A}^{-1} \uv \vv^\H \pmb{A}^{-1}}{1+\vv^\H \pmb{A}^{-1} \uv}\right) \xv$. 

By expanding $\rvMat{Y}_{[1]}, \dots, \rvMat{Y}_{[T]}$, we get that, given $\rvVec{X}$,
\begin{align}
\E_{\rvVec{H},\rvMat{Z}}[\|\rvMat{Y}_{[i]}\|^2] = N(1+|\rv{X}_i|^2), \quad \forall i, 
\end{align}
and
\begin{multline}
\E_{\rvVec{H},\rvMat{Z}}[\|\rvMat{Y}_{[i]}\|^2\|\rvMat{Y}_{[v]}\|^2 - |\rvMat{Y}_{[i]}^\H\rvMat{Y}_{[v]}|^2] \\= (N^2-N)(1+|\rv{X}_v|^2+|\rv{X}_i|^2),\quad i\ne v.
\end{multline}
Then, using Jensen's inequality and Lemma~\ref{lemma:2} (by letting $\alpha$
arbitrarily close to $1$), we get that
\begin{align}
&\E[-\log q(\rvMat{Y} | \rv{V}=v)] \notag\\&\le (N+T-1)\E[\log(1\!+\!N\!+\!N|\rv{X}_v|^2)] \notag\\&\ +\! N\!\sum_{i=1,i\ne \rv{v}}^{T}\E[\log\frac{N\!+\!N|\rv{X}_i|^2\!+\!(N^2\!-\!N)(1\! +\! |\rv{X}_v|^2\!+\!|\rv{X}_i|^2)}{1+N+N|\rv{X}_v|^2}] \notag\\&\ + O(\log\log P) \\
&= (N+T-1)\E[\log(1+|\rv{X}_v|^2)] \notag\\&\ +\! N\sum_{i=1,i\ne v}^{T}\E[\log\left(1\!+\!\frac{|\rv{X}_i|^2}{1\!+\!|\rv{X}_v|^2}\right)] \!+\! O(\log\log P).
\end{align}
Taking expectation over $\rv{V}$, we obtain~\eqref{eq:propp2p}, which concludes the proof.

\subsection{Proof of Proposition~\ref{prop:meanLogQ_MAC}} \label{proof:propMeanLogQ_MAC}
Hence, we obtain from Lemma~\ref{lemma:distributionR},
\begin{align}
&\E[-\log(q(\Yt |\rvVec{X}_2, \rv{V}=v)] \notag\\&= N\E[\log\|\Yt_{[v]}\|^2] + \sum_{i=1,i\ne v}^{T-1}\E[\log\det\left(\Id_N+ \Yt_{[v]}\Yt_{[v]}^\H\right)] \notag\\
&\hspace{.3cm}+ N\sum_{i=1,i\ne v}^{T-1}\E[\log\left\|\left(\Id_N+ \Yt_{[v]}\Yt_{[v]}^\H\right)^{-\frac{1}{2}} \Yt_{[i]}\right\|^2] \notag\\
&\hspace{.3cm}+ \E[\log\det\left((1+\|\rvVec{X}_2\|^2)\Id_N+ \Yt_{[v]}\Yt_{[v]}^\H\right)] \notag\\
&\hspace{.3cm}+ N\E\bigg[\log\Big\|\left((1\!+\!\|\rvVec{X}_2\|^2)\Id_N\!+\! \Yt_{[v]}\Yt_{[v]}^\H\right)^{-\frac{1}{2}} \Yt_{[T]}\Big\|^2\bigg]\! \notag\\
&\hspace{.3cm}+ O(\log\log P) \\
&\le N\E[\log(1\!+\!|\Xt_{1v}|^2)] \!+\! \sum_{i=1,i\ne v}^{T}\!B_i + O(\log\log P), \label{eq:bound_meanlog1}
\end{align}
where 
\begin{align}
B_i \defeq\! \E\left[\log\left(1\!+\! \|\Yt_{[v]}\|^2\right)\!+\! N\log\bigg(\|\Yt_{[i]}\|^2 \!-\! \frac{|\Yt_{[i]}^\H\Yt_{[v]}|^2}{1\!+\!\|\Yt_{[v]}\|^2} \bigg)\right], \notag
\end{align}
for $i \notin \{v,T\}$, and
\begin{multline}
B_T \defeq \E\Bigg[\log\bigg((1+\|\rvVec{X}_2\|^2)^{N}\Big(1 + \frac{\|\Yt_{[v]}\|^2}{1+\|\rvVec{X}_2\|^2}\Big)\bigg) \Bigg.\\\Bigg. +N\log\Bigg(\frac{1}{1+\|\rvVec{X}_2\|^2}\bigg(\|\Yt_{[T]}\|^2- \frac{|\Yt_{[T]}^\H\Yt_{[v]}|^2}{1+\|\rvVec{X}_2\|^2+\|\Yt_{[v]}\|^2} \bigg) \Bigg)\Bigg]. \notag
\end{multline}
By expanding $\Yt_{[1]}, \dots, \Yt_{[T]}$, we get that, given $\rvVec{X}_1$ and $\rvVec{X}_2$,
\begin{multline}
\E_{\rvVec{H}_1,\rvMat{Z}}[\|\Yt_{[v]}\|^2\|\Yt_{[i]}\|^2 - |\Yt_{[i]}^\H\Yt_{[v]}|^2] \\= (N^2-N)\left(1+|\Xt_{1v}|^2 + |\Xt_{1i}|^2\right), \quad i \notin \{v,T\},
\end{multline}
and
\begin{multline}
\E_{\rvVec{H}_1,\rvMat{Z}}[\|\Yt_{[v]}\|^2\|\Yt_{[T]}\|^2 - |\Yt_{[T]}^\H\Yt_{[v]}|^2] \\= (N^2-N)\left((1+\|\rvVec{X}_2\|^2)(1+|\Xt_{1v}|^2) + |\Xt_{1T}|^2\right) \\
\le (N^2-N)(1+\|\rvVec{X}_2\|^2)(1+|\Xt_{1v}|^2 + |\Xt_{1T}|^2).
\end{multline}
Then, applying repeatedly Lemma~\ref{lemma:2} (by letting $\alpha$
arbitrarily close to $1$), and Jensen's inequality,
\begin{align}
B_i &= \E[-(N-1)\log\left(1 + \|\Yt_{[v]}\|^2\right)] \notag\\
&\hspace{.5cm}+ N\E[\log\left(\|\Yt_{[i]}\|^2+\|\Yt_{[i]}\|^2\|\Yt_{[v]}\|^2 -  |\Yt_{[i]}^\H\Yt_{[v]}|^2\right)] \notag\\
&\le \E[-(N-1)\log(1+N+N|\Xt_{1v}|^2)] \notag\\
&\hspace{.5cm}+ N\E[\log(N^2(1\!+\!|\Xt_{1i}|^2)+(N^2\!-\!N)|\Xt_{1v}|^2)]\!+\!O(1), \notag\\
&=\E[\log(1\!+\!|\Xt_{1v}|^2)] \!+\! N\E[\log\Big(1\!+\!\frac{|\Xt_{1i}|^2}{1\!+\!|\Xt_{1v}|^2}\Big)] \!+\! O(1)\label{eq:tmp458}
\end{align}
for $i \notin \{v,T\}$, and
\begin{align}
&B_T = N\E[\log(1+\|\rvVec{X}_2\|^2)] + \E[\log\left(1+\frac{\|\Yt_v\|^2}{1+\|\rvVec{X}_2\|^2}\right)]\notag\\
&\hspace{.2cm}+ N\E[\log\Big(\|\Yt_{[T]}\|^2\! +\! \frac{\|\Yt_{[v]}\|^2\|\Yt_{[T]}\|^2\! -\! |\Yt_{[T]}^\H\Yt_{[v]}|^2}{1+\|\rvVec{X}_2\|^2}\Big)] \notag\\
&\hspace{.2cm}-N\E[\log\left(1+\|\rvVec{X}_2\|^2+\|\Yt_v\|^2\right)] \\ 
&\le N\E[\log(1+\|\rvVec{X}_2\|^2)] + \E[\log\left(1+\frac{N+N|\Xt_{1v}|^2}{1+\|\rvVec{X}_2\|^2}\right)]\notag\\
&\hspace{.2cm}+ N\E\Big[\log\Big(N(1 + \|\rvVec{X}_2\|^2) \Big.\Big. \notag\\&\hspace{2cm}\Big.\Big.+ (N^2-N)(1+|\Xt_{1v}|^2)+N^2|\Xt_{1T}|^2\Big)\Big] \notag\\
&\hspace{.2cm}- N\E[\log(1+\|\rvVec{X}_2\|^2+N+N|\Xt_{1v}|^2)] + O(1) \\
&= N\E[\log(1+\|\rvVec{X}_2\|^2)] + \E[\log\left(1+\frac{|\Xt_{1v}|^2}{1+\|\rvVec{X}_2\|^2}\right)]\notag\\
&\hspace{.2cm}+N\E[\log\left(1+\frac{|\Xt_{1T}|^2}{1+\|\rvVec{X}_2\|^2+|\Xt_{1v}|^2}\right)] + O(1)\label{eq:tmp466}
\end{align}
Plugging \eqref{eq:tmp458} and \eqref{eq:tmp466} into~{\eqref{eq:bound_meanlog1}} then taking expectation over $\rv{V}$, we obtain \eqref{eq:bound_meanlog}, which concludes the proof. 

\end{appendix}

\bibliographystyle{IEEEtran}
\bibliography{IEEEabrv,./biblio}

\begin{thebibliography}{10}
\providecommand{\url}[1]{#1}
\csname url@samestyle\endcsname
\providecommand{\newblock}{\relax}
\providecommand{\bibinfo}[2]{#2}
\providecommand{\BIBentrySTDinterwordspacing}{\spaceskip=0pt\relax}
\providecommand{\BIBentryALTinterwordstretchfactor}{4}
\providecommand{\BIBentryALTinterwordspacing}{\spaceskip=\fontdimen2\font plus
\BIBentryALTinterwordstretchfactor\fontdimen3\font minus
  \fontdimen4\font\relax}
\providecommand{\BIBforeignlanguage}[2]{{%
\expandafter\ifx\csname l@#1\endcsname\relax
\typeout{** WARNING: IEEEtran.bst: No hyphenation pattern has been}%
\typeout{** loaded for the language `#1'. Using the pattern for}%
\typeout{** the default language instead.}%
\else
\language=\csname l@#1\endcsname
\fi
#2}}
\providecommand{\BIBdecl}{\relax}
\BIBdecl

\bibitem{Foschini}
G.~J. Foschini and M.~J. Gans, ``On limits of wireless communications in a
  fading environment when using multiple antennas,'' \emph{Wireless personal
  communications}, vol.~6, no.~3, pp. 311--335, 1998.

\bibitem{Telatar1999capacityMIMO}
I.~Telatar, ``Capacity of multi-antenna {G}aussian channels,'' \emph{European
  Trans. Telecommun.}, vol.~10, pp. 585--595, Nov./Dec. 1999.

\bibitem{Moser}
A.~Lapidoth and S.~Moser, ``Capacity bounds via duality with applications to
  multiple-antenna systems on flat-fading channels,'' \emph{IEEE Trans. Inf.
  Theory}, vol.~49, no.~10, pp. 2426--2467, Oct. 2003.

\bibitem{Hochwald2000unitaryspacetime}
B.~M. Hochwald and T.~L. Marzetta, ``Unitary space-time modulation for
  multiple-antenna communications in {R}ayleigh flat fading,'' \emph{IEEE
  Trans. Inf. Theory}, vol.~46, no.~2, pp. 543--564, Mar. 2000.

\bibitem{ZhengTse2002Grassman}
L.~Zheng and D.~N.~C. Tse, ``Communication on the {G}rassmann manifold: {A}
  geometric approach to the noncoherent multiple-antenna channel,'' \emph{IEEE
  Trans. Inf. Theory}, vol.~48, no.~2, pp. 359--383, Feb. 2002.

\bibitem{Yang2013CapacityLargeMIMO}
W.~Yang, G.~Durisi, and E.~Riegler, ``On the capacity of large-{MIMO}
  block-fading channels,'' \emph{IEEE J. Sel. Areas Commun.}, vol.~31, no.~2,
  pp. 117--132, Feb. 2013.

\bibitem{Hassibi2003howmuchtraining}
B.~Hassibi and B.~M. Hochwald, ``How much training is needed in
  multiple-antenna wireless links?'' \emph{IEEE Trans. Inf. Theory}, vol.~49,
  no.~4, pp. 951--963, Apr. 2003.

\bibitem{Fadel2016coherencedisparity}
M.~Fadel and A.~Nosratinia, ``Coherence disparity in broadcast and multiple
  access channels,'' \emph{IEEE Trans. Inf. Theory}, vol.~62, no.~12, pp.
  7383--7401, Dec. 2016.

\bibitem{Hoang2017BC}
K.~H. Ngo, S.~Yang, and M.~Guillaud, ``An achievable {DoF} region for the
  two-user non-coherent {MIMO} broadcast channel with statistical {CSI},'' in
  \emph{2017 IEEE Information Theory Workshop (ITW)}, Nov. 2017, pp. 604--608.

\bibitem{Zhang2017spatiallyCorrelatedBC}
F.~Zhang, M.~Fadel, and A.~Nosratinia, ``Spatially correlated {MIMO} broadcast
  channel: Analysis of overlapping correlation eigenspaces,'' in \emph{2017
  IEEE International Symposium on Information Theory (ISIT)}, Jun. 2017, pp.
  1097--1101.

\bibitem{ElGamal}
A.~El~Gamal and Y.-H. Kim, \emph{Network Information Theory}.\hskip 1em plus
  0.5em minus 0.4em\relax New York, NY, USA: Cambridge University Press, 2011.

\bibitem{ShengIT2017phasenoise}
S.~Yang and S.~{Shamai (Shitz)}, ``On the multiplexing gain of discrete-time
  {MIMO} phase noise channels,'' \emph{IEEE Trans. Inf. Theory}, vol.~63,
  no.~4, pp. 2394--2408, Apr. 2017.

\end{thebibliography}

\end{document}